\DeclareMathOperator{\sech}{sech}
\DeclareMathOperator{\csch}{csch}
\newtheorem{Thm}{Theorem}
\numberwithin{equation}{section}
\newcommand{\wt}{\widetilde}
\newcommand{\wh}{\widehat}
\newcommand{\st}{\hbox{\tiny\it{T}}}
\newcommand{\nn}{\nonumber}
\begin{document}
\title{Solutions to nonlocal nonisospectral (2+1)-dimensional breaking soliton equations}
\author{Hai-jing Xu, Wei Feng, Song-lin Zhao$^{*}$\\
\\\lowercase{\scshape{
Department of Applied Mathematics, Zhejiang University of Technology,
Hangzhou 310023, P.R. China}}}
\email{*Corresponding Author: songlinzhao2021@126.com}

\begin{abstract}

Nonlocal reductions of a nonisospectral (2+1)-dimensional breaking soliton
Ablowitz-Kaup-Newell-Segur equation are discussed on the base of double Wronskian reduction technique.
Various types of solutions, including soliton solutions and Jordan-block solutions, for the resulting
nonlocal equations are derived. Dynamics of these obtained solutions are analyzed and illustrated.

\end{abstract}
	
\keywords{nonlocal nonisospectral (2+1)-dimensional breaking soliton equation, solutions, dynamics}
	
\maketitle
\section{Introduction}
\label{sec-1}

Among the nonlinear partial differential equations, of particular
interest are the nonlocal integrable systems that admit ``parity-time symmetry''. The activities
in the field of nonlocal integrable system were initiated by Ablowitz and Musslimani \cite{AbMu-2013},
who introduced an integrable nonlocal nonlinear Schr\"{o}dinger (NLS) equation
\begin{align}
\label{n-NLS}
iu_t(x,t)+u_{xx}(x,t)+u^2(x,t)u^*(-x,t)=0,
\end{align}
as a new reduction of the Ablowitz-Kaup-Newell-Segur (AKNS) hierarchy. This equation
is parity-time symmetric because it is invariant under the action of the parity-time operator,
i.e., the joint transformations $x\rightarrow -x,~t\rightarrow -t$ and complex conjugation $^*$.
Since then, the nonlocal integrable systems have attracted much attention from both mathematics and the physical application
of nonlinear optics and magnetics \cite{Gad,Mak,Muss}. Up to now, more and more nonlocal integrable equations have been established
\cite{AbMu-SAPM,Lou,Pekcan}, including
nonlocal versions of the Korteweg-de Vries, the modified Korteweg-de Vries, the sine-Gordon,
the nonlinear ``loop soliton'', the Davey-Stewartson equations, etc.
Many traditional methods, such as the inverse scattering transformation,
the Riemann-Hilbert approach, the Hirota's bilinear method, the Darboux transformation and the Cauchy matrix approach,
have been used to search for exact solutions to the nonlocal integrable systems
\cite{AM-Nonl-2016,Yan-AML-2015,Zhou-ZX-1,XU-AML-2016,ChenZ-AML-2017,CDLZ,FZS-IJMPB,FZ-ROMP}.

Very recently, there have been some research on the nonlocal nonisospectral integrable equations.
With the help of the reduction technique developed in \cite{CDLZ}, by using a second-order nonisospectral AKNS system,
Liu, Wu and Zhang \cite{LWZ} investigated a nonlocal
Gross-Pitaevskii equation with a parabolic potential and a gain term,
and constructed its one-soliton solution, two-soliton solutions and
Jordan-block solutions. Some interesting dynamics were revealed.
Subsequently, the authors of the present paper \cite{FZ-IJMPB,Xu-Zhao} also used this approach to study the nonlocal reduction of three
nonisospectral AKNS type equations, including the second order nonisospectral AKNS equation,
the third order nonisospectral AKNS equation and the first negative order nonisospectral AKNS equation.
Consequently, some real and complex nonlocal nonisospectral NLS,
modified Korteweg-de Vries and sine-Gordon equations were presented.

As a typical (2+1)-dimensional generalization of the NLS equation,
the (2+1)-dimensional breaking soliton equation
is usually used to describe the (2+1)-dimensional
interaction of a Riemann wave propagating along the $y$-axis with a long-wave propagating along the $x$-axis \cite{Bogo-1990}.
In recent years, the research of nonlocal (2+1)-dimensional breaking soliton type equations has received remarkable progress.
Zhu and Zuo \cite{Zuo} proposed some nonlocal (2+1)-dimensional breaking soliton equations and showed that most of them could be decomposed into two (1+1)-dimensional
integrable systems. Moreover, Darboux transformations and some exact solutions were constructed (see also \cite{Zhu}).
Wang, Wu and Zhang \cite{WWZ-CTP} presented double Wronskian solutions to the nonlocal isospectral (2+1)-dimensional breaking soliton AKNS hierarchy
and illustrated the dynamical behaviors of some obtained solutions.

In this paper, we are interested in the nonlocal reductions of the following nonisospectral (2+1)-dimensional breaking soliton AKNS equation \cite{Yao}
\begin{subequations}
\label{n-(2+1)-DBS}
\begin{align}
& u_t=-y(u_{xy}-2u\partial^{-1}_{x}(uv)_y)-\frac{x}{2}(u_{xx}-2u^2v)-u_{x}+u\partial^{-1}_{x}(uv), \\
& v_t=y(v_{xy}-2v\partial^{-1}_{x}(uv)_y)+\frac{x}{2}(v_{xx}-2uv^2)+v_{x}-v\partial^{-1}_{x}(uv),
\end{align}
\end{subequations}
where $u$ and $v$ are two functions of $x,~y,~t$ and
$\partial_x^{-1}=\frac{1}{2}\left(\int_{-\infty}^x-\int^{+\infty}_x\right)\cdot dx$.
When $y=x$, system \eqref{n-(2+1)-DBS} is nothing but the second order nonisospectral AKNS equation \cite{Chen-book}.
We plan to impose real nonlocal reduction, respectively, complex nonlocal reduction on the
system \eqref{n-(2+1)-DBS} and get the real and complex nonlocal nonisospectral (2+1)-dimensional breaking soliton NLS equations.
We will employ the reduction approach \cite{ChenZ-AML-2017,CDLZ} to get solutions of the resulting equations.
For simplicity, we call equation \eqref{n-(2+1)-DBS} NBS-AKNS for short. Analogously,
we denote the real nonlocal resulting equation by RNNBS-NLS,
respectively, the complex nonlocal resulting equation by CNNBS-NLS.

The paper is organized as follows. In Sec. 2, we briefly recall Lax representation and double Wronskian solutions of
the NBS-AKNS equation \eqref{n-(2+1)-DBS}. In Sec. 3 and  Sec. 4,
we, respectively, study real nonlocal reduction and complex nonlocal reduction
of the system \eqref{n-(2+1)-DBS}. By solving determining equations, we give
soliton solutions and Jordan-block solutions for the resulting nonlocal equations.
Meanwhile, we illustrate and analyze the dynamics of some obtained
solutions. Section 5 is devoted to the conclusions.

%=============================================================	

\section{Lax representation and double Wronskian solutions of \eqref{n-(2+1)-DBS}}\label{sec-2}

In this section, we briefly recall the Lax representation and double Wronskian solutions of the system \eqref{n-(2+1)-DBS}.
For the details one can refer to Ref. \cite{Yao}.

\subsection{Lax representation}

System \eqref{n-(2+1)-DBS} is related to the famous AKNS spectral problem \cite{AKNS-Lax}
\begin{subequations}
\label{n-(2+1)-DBS-Lp}
\begin{align}
\label{AKNS-sp}
&\Phi_x=M\Phi,\quad
M=\left(\begin{array}{cc}
-\lambda & u \\
v & \lambda
\end{array}\right),
\end{align}
together with time evolution
\begin{align}
\label{AKNS-time}
\Phi_t=2y\lambda \Phi _y+N \Phi, \quad
N=\left( \begin{array}{cc}
A & B \\
C & -A
\end{array}
\right),
\end{align}
\end{subequations}
where $\Phi=(\Phi_{1}, \Phi_{2})^{\st}$ and $\lambda$ is the spectral parameter. Here and hereafter,
$K^{\st}$ means transpose of matrix $K$.

The nonisospectral zero curvature equation $M_{t}-N_{x}+[M,N]-2y\lambda M_y=0$ gives rise to
\begin{subequations}
\label{u-v-A}
\begin{align}
& A=\partial^{-1}(v,u)\left(\begin{array}{c}
-B \\ C
\end{array}\right)+(\lambda_t-2\lambda \lambda_y)x+A_0,\\
& \label{q-r} \left(\begin{array}{c}
u \\ v
\end{array}\right)_t=L\left(\begin{array}{c}
-B \\ C
\end{array}\right)-2\lambda\left(\begin{array}{c}
-B \\ C
\end{array}\right)+2y\lambda\left(\begin{array}{c}
u_y\\v_y
\end{array}\right)
+2\big((2x\lambda \lambda_y-\lambda_t)-A_0\big)
\left(\begin{array}{c}
-u\\v
\end{array}\right),
\end{align}
\end{subequations}
where $A_0$ is a constant and recursion operator
\begin{align}
\label{L}
L=\left(\begin{array}{cc}
-\partial & 0\\
0 & \partial
\end{array}\right)+2\left(\begin{array}{c}
u \\ -v
\end{array}\right)\partial^{-1}(v,u).
\end{align}
Taking $2\lambda \lambda_y-\lambda_t=\lambda^2$, $A_0=0$ and expanding $(B,C)^{\st}$ into polynomial as
\begin{align}
\label{BC}
\left(\begin{array}{c}
B\\C
\end{array}\right)=\sum_{j=1}^{2}\left(\begin{array}{c}
b_j\\c_j
\end{array}\right)\lambda^{2-j},
\end{align}
by imposing some special choices on $(b_j,c_j)^{\st}$, from \eqref{q-r} one can derive the
NBS-AKNS equation \eqref{n-(2+1)-DBS}. Thus \eqref{n-(2+1)-DBS-Lp}
supplies Lax representation for system \eqref{n-(2+1)-DBS}, where
\begin{subequations}
\label{ABC}
\begin{align}
& A=-x\lambda^2+\frac{1}{2}xuv+\frac{1}{2}\partial^{-1}_x(uv)+y\partial^{-1}_x(uv)_y, \\
& B=xu\lambda-\frac{1}{2}xu_x-\frac{1}{2}u-yu_y, \\
& C=xv\lambda+\frac{1}{2}xv_x+\frac{1}{2}v+yv_y.
\end{align}
\end{subequations}

\subsection{Double Wronskian solutions}

Introducing the dependent variable transformations
\begin{align}
\label{uv-tran}
u=\frac{g}{f}, \qquad v=-\frac{h}{f},
\end{align}
the NBS-AKNS equation \eqref{n-(2+1)-DBS} can be transformed into the following bilinear form
\begin{subequations}
\label{bi-sys1}
\begin{align}
& \label{bi-sys1-a} (D_t+\frac{x}{2}D^2_x+yD_x D_y)g\cdot f=-g_xf, \\
& \label{bi-sys1-b} (D_t-\frac{x}{2}D^2_x-yD_x D_y)h\cdot f=h_xf, \\
& \label{bi-sys1-c} D^2_xf \cdot f-2gh=0,
\end{align}
\end{subequations}
where $D$ is the well-known Hirota bilinear operator \cite{Hirota-book} defined by
\begin{align*}
D_t^m D_x^n D_y^l f\cdot g=(\partial_t-\partial_{t'})^m(\partial_x-\partial_{x'})^n(\partial_y-\partial_{y'})^l f(t,x,y)g(t',x',y')|_{t'=t,x'=x,y'=y}.
\end{align*}

Double Wronski determinant solutions of the bilinear form \eqref{bi-sys1} can be summarized in the
following theorem.
\begin{Thm}
\label{Thm-AKNS-solu}
The double Wronski determinants
\begin{subequations}
\label{fgh1}
\begin{align}
& f=|\phi,\partial_x\phi,\ldots,\partial^{n}_x\phi;\psi,\partial_x\psi,\ldots,\partial^{m}_x\psi|=|\wh{\phi}^{(n)};\wh{\psi}^{(m)}|, \\
& g=2|\phi,\partial_x\phi,\ldots,\partial^{n+1}_x\phi;\psi,\partial_x\psi,\ldots,\partial^{m-1}_x\psi|=2|\wh{\phi}^{(n+1)};\wh{\psi}^{(m-1)}|, \\
& h=2|\phi,\partial_x\phi,\ldots,\partial^{n-1}_x\phi;\psi,\partial_x\psi,\ldots,\partial^{m+1}_x\psi|=2|\wh{\phi}^{(n-1)};\wh{\psi}^{(m+1)}|,
\end{align}
\end{subequations}
constituted by $\phi=(\phi_1, \phi_2, \ldots, \phi_{n+m+2})^{\st}$ and $\psi=(\psi_1, \psi_2, \ldots, \psi_{n+m+2})^{\st}$
solve the bilinear system \eqref{bi-sys1}, provided that $\phi$ and $\psi$ satisfy the following condition equation set
\begin{subequations}
\label{CES1}
\begin{align}
& \label{CES1-a} \phi_{x}=K(t)\phi, \quad \phi_{y}=\phi_{xx},\quad  \phi_{t}=-2y\phi_{xxx}-x\phi_{xx}+n\phi_{x}, \\
& \label{CES1-b} \psi_{x}=-K(t)\psi, \quad \psi_{y}=-\psi_{xx},\quad  \psi_{t}=-2y\psi_{xxx}+x\psi_{xx}-m\psi_{x},
\end{align}
\end{subequations}
respectively, where $K(t)$ is an $(n+m+2)\times (n+m+2)$ matrix satisfying $K_{t}(t)=-K^2(t)$.
\end{Thm}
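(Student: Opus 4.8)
The plan is to verify directly that the double Wronskians $f=|\wh{\phi}^{(n)};\wh{\psi}^{(m)}|$, $g=2|\wh{\phi}^{(n+1)};\wh{\psi}^{(m-1)}|$ and $h=2|\wh{\phi}^{(n-1)};\wh{\psi}^{(m+1)}|$ satisfy the three bilinear equations \eqref{bi-sys1}, by computing their $x$-, $y$- and $t$-derivatives from the condition set \eqref{CES1} and reducing each equation to a determinantal identity. First I would fix a condensed notation in which a double Wronskian is recorded by the orders of its $\phi$- and $\psi$-columns, with shorthand for the determinants in which a single ``top'' column has been raised or lowered. The elementary fact driving the whole computation is that, since $\partial_x(\partial_x^{j}\phi)=\partial_x^{j+1}\phi$ already appears as a column whenever the index $j$ is interior, an $x$-differentiation of a Wronskian collapses to the contributions of the highest $\phi$- and $\psi$-columns only; iterating this yields compact expressions for $f_x$, $f_{xx}$, $g_x$ and $h_x$ as short sums of column-shifted Wronskians.

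The substantive input is the $y$- and $t$-evolution. From $\phi_y=\phi_{xx}$ and $\psi_y=-\psi_{xx}$ the $y$-derivatives collapse by the same repeated-column mechanism. For the $t$-evolution I would first record the consequence of \eqref{CES1-a} that
\begin{align}
\partial_t\partial_x^{j}\phi=-2y\,\partial_x^{j+3}\phi-x\,\partial_x^{j+2}\phi+(n-j)\,\partial_x^{j+1}\phi,\nn
\end{align}
with the analogous identity for $\psi$ carrying $+x\,\partial_x^{j+2}\psi$ and weight $(j-m)$; the explicit $x$- and $y$-coefficients and the index weights arise precisely from commuting $\partial_x^{j}$ through the $x\partial_x^{2}$ term and from the $n\phi_x$, $-m\psi_x$ terms. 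Summing these column contributions in $f_t$, $g_t$ and $h_t$, the $-2y\,\partial_x^{j+3}$ and $-x\,\partial_x^{j+2}$ pieces produce nearest- and next-nearest-column shifts, most of which collapse, while the index weights telescope; this is what converts the time derivatives into a handful of shifted Wronskians carrying exactly the explicit $x$- and $y$-prefactors seen in the $\tfrac{x}{2}D_x^2$ and $yD_xD_y$ terms of \eqref{bi-sys1}.

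With every derivative expressed as a shifted double Wronskian, I would substitute into the expanded Hirota forms, for instance rewriting \eqref{bi-sys1-c} as $f_{xx}f-f_x^2=gh$, and check that each equation becomes a determinantal identity. Equation \eqref{bi-sys1-c} should reduce to a single Pl\"ucker (Jacobi) relation among the relevant bordered determinants, exactly as in the isospectral double-Wronskian theory, while \eqref{bi-sys1-a} and \eqref{bi-sys1-b} should reduce to the same type of Laplace-expansion identity once the explicit $x$- and $y$-terms are grouped against the right-hand sides $-g_xf$ and $h_xf$. I expect the main obstacle to be the bookkeeping rather than any single determinant identity: one must confirm that the nonisospectral contributions --- the index weights $n-j$ and $j-m$ coming from $n\phi_x$ and $-m\psi_x$ together with the $x\partial_x^{2}$ commutators --- cancel exactly against the $\tfrac{x}{2}D_x^2$ and $yD_xD_y$ operators and the first-order right-hand sides, leaving the plain Pl\"ucker relation in each case. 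The hypothesis $K_t(t)=-K^2(t)$ enters only to guarantee that \eqref{CES1} is a compatible overdetermined system for $\phi$ and $\psi$, so it need not be re-examined in the verification itself.
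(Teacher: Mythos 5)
Your overall strategy --- expressing every derivative of $f$, $g$, $h$ as a column-shifted double Wronskian and reducing the bilinear equations to determinantal identities --- is the right family of argument; indeed the paper omits the proof entirely and defers to the verification in \cite{Yao}, which has exactly this shape. But there is a genuine gap: nowhere in your outline do you use the first members of \eqref{CES1}, namely $\phi_x=K(t)\phi$ and $\psi_x=-K(t)\psi$, and without them the reduction you describe cannot close. These conditions are not mere bookkeeping: they are the only hypotheses coupling the $\phi$-block to the $\psi$-block through one and the same matrix $K(t)$, and the bilinear equations are false without that coupling. Concretely, take $n=m=0$ and
\begin{align*}
\phi=\left(e^{p_1x+p_1^2y},\,e^{p_2x+p_2^2y}\right)^{\st},\qquad
\psi=\left(e^{-q_1x-q_1^2y},\,e^{-q_2x-q_2^2y}\right)^{\st},\qquad
p_i=(t-a_i)^{-1},\quad q_i=(t-b_i)^{-1}.
\end{align*}
Every condition you actually invoke holds componentwise ($\phi_y=\phi_{xx}$, $\psi_y=-\psi_{xx}$, and both $t$-evolutions with $n=m=0$), yet a two-line computation gives
\begin{align*}
D_x^2f\cdot f-2gh=-2\,\phi_1\phi_2\psi_1\psi_2\bigl[(p_1-p_2)-(q_1-q_2)\bigr]^2,
\end{align*}
which is not identically zero for generic $a_i,b_i$; it vanishes exactly in the nondegenerate case $b_i=a_i$, i.e.\ precisely when $\psi_x=-K\psi$ holds with the \emph{same} matrix $K=\mbox{Diag}(p_1,p_2)$ that gives $\phi_x=K\phi$. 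So \eqref{bi-sys1-c} is not a Pl\"ucker identity among the shifted Wronskians alone, contrary to the expectation in your third paragraph, and the same problem afflicts \eqref{bi-sys1-a}--\eqref{bi-sys1-b}.

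The missing ingredient is the trace identity that converts the $x$-dispersion into relations between $\phi$-raised and $\psi$-raised Wronskians: for any $N\times N$ matrix $\Gamma$ and column vectors $a_1,\ldots,a_N\in\mathbb{C}^N$,
\begin{align*}
\sum_{j=1}^{N}|a_1,\ldots,\Gamma a_j,\ldots,a_N|=\operatorname{tr}(\Gamma)\,|a_1,\ldots,a_N|.
\end{align*}
Taking $\Gamma=K(t)$, and noting that $K\partial_x^j\phi=\partial_x^{j+1}\phi$ while $K\partial_x^j\psi=-\partial_x^{j+1}\psi$, all duplicate-column terms drop and one obtains
\begin{align*}
|\wh{\phi}^{(n-1)},\partial_x^{n+1}\phi;\wh{\psi}^{(m)}|-|\wh{\phi}^{(n)};\wh{\psi}^{(m-1)},\partial_x^{m+1}\psi|=\operatorname{tr}(K)\,f,
\end{align*}
which, paired with $f_x$ (the \emph{sum} of those same two determinants), lets each raised Wronskian be isolated; the analogous relation with $\Gamma=K^2(t)$ controls the second-order shifts appearing in $f_{xx}$, $f_y$, $g_y$, $h_y$. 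Only after these relations are substituted do the three bilinear equations collapse to Pl\"ucker relations, and the resulting $\operatorname{tr}(K)$, $\operatorname{tr}(K^2)$ terms must be tracked and shown to cancel. A secondary point: your derivation of $\partial_t\partial_x^j\phi$ by commuting $\partial_x^j$ past $\partial_t$ is correct, but the conclusion that $K_t=-K^2$ ``need not be re-examined in the verification'' is too strong, since the verification manipulates $K(t)$-dependent quantities throughout (in the reference proof the ODE is used directly, via $\partial_tK^j=-jK^{j+1}$, to differentiate the columns $\partial_x^j\phi=K^j\phi$). Your outline becomes a proof once the trace-identity step is inserted before the Pl\"ucker reduction.
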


Here we skip the proof since it is similar to the one given in \cite{Yao}.
Considering the condition equation set \eqref{CES1}, we know
\begin{align}
\label{phsi-solu1}
\phi=\left(-K(t)\right)^{-n}e^{K(t)x+K^2(t)y} \alpha, \quad
\psi=\left(-K(t)\right)^{-m}e^{-K(t)x-K^2(t)y} \beta,
\end{align}
where $\alpha=(\alpha_1,\alpha_2,\ldots,\alpha_{n+m+2})^{\st}$ and $\beta=(\beta_1,\beta_2,\ldots,\beta_{n+m+2})^{\st}$
are two constant column vectors. The complex local reduction of the NBS-AKNS equation \eqref{n-(2+1)-DBS}
has been discussed in \cite{Yao}.
In the following two sections, we, respectively, investigate real and complex nonlocal reductions
of the system \eqref{n-(2+1)-DBS}.

%\noindent {\bf Remark 1}: {\it System  admit complex local reduction
%\begin{align*}
%v(x,y,t)=\delta u^*(x,y,t), \quad t\rightarrow \pm it,
%\end{align*}
%and the reduced equation reads
%\begin{align}
%\mp iu_t=-y(u_{xy}-2\delta u\partial^{-1}_{x}|u|^2_y)-\frac{x}{2}(u_{xx}-2\delta u|u|^2)-u_{x}+\delta u\partial^{-1}_{x}|u|^2.
%\end{align}
%}
	
\section{Real nonlocal reduction of the system \eqref{n-(2+1)-DBS}} \label{sec-3}

In this section, we shall consider real nonlocal reduction of the system \eqref{n-(2+1)-DBS}.
We show that the real nonlocal reduced
equations include reverse-$(y,t)$ type and reverse-$(x,y)$ type.
By imposing suitable constraints on the pair $\phi$ and $\psi$ in the double Wronski determinant,
we derive the formal expression of solution for the reduced equations. And moreover,
we present soliton solutions and Jordan-block solutions in terms of the eigenvalue structure of matrix $K$.
Dynamics of some obtained solutions are also discussed.

\subsection{Reduction procedure}\label{sec-3.1.1}

For the NBS-AKNS equation \eqref{n-(2+1)-DBS}, it allows a real nonlocal reduction 	
\begin{align}
\label{real-re1}
v(x,y,t)=\delta u(\sigma x,-y,-\sigma t),\quad \sigma,~~\delta=\pm 1.
\end{align}
In this case the equation \eqref{n-(2+1)-DBS} yields
\begin{align}
\label{nn-BSE-R}
u_t= &-y(u_{xy}-2\delta u\partial^{-1}_{x}(uu(\sigma x,-y,-\sigma t))_y)-\frac{x}{2}(u_{xx}-2\delta u^2u(\sigma x,-y,-\sigma t)) \nn \\
& -u_{x}+\delta u\partial^{-1}_{x}(uu(\sigma x,-y,-\sigma t)),
\end{align}
which is the RNNBS-NLS equation. Equation \eqref{nn-BSE-R} is preserved under transformation $u\rightarrow -u$.
Besides, equation \eqref{nn-BSE-R} with $(\sigma,\delta)=(\pm 1, 1)$ and
with $(\sigma,\delta)=(\pm 1,-1)$ can be transformed into each other by taking $u\rightarrow \pm iu$.
When $\sigma=1$ \eqref{nn-BSE-R} is reverse-$(y,t)$ type and when $\sigma=-1$ \eqref{nn-BSE-R} is reverse-$(x,y)$ type.
Compared with Ref. \cite{WWZ-CTP}, we know that both of the
isospectral and nonisospectral (2+1)-dimensional breaking soliton AKNS equations have
reverse-$(y,t)$ type nonlocal reduction. There is no reverse-$(x,y,t)$ nonlocal reduction
for the NBS-AKNS equation \eqref{n-(2+1)-DBS}. The reverse-$(x,y)$ nonlocal reduction
which is valid for the nonlocal reduction of the NBS-AKNS equation \eqref{n-(2+1)-DBS},
is also not admitted in the nonlocal reduction of the
isospectral (2+1)-dimensional breaking soliton AKNS equation.

Our subsequent work is to implement the reduction procedures on the double Wronskian solutions, by which solutions for the
RNNBS-NLS equation \eqref{nn-BSE-R} can be obtained from those of unreduced
NBS-AKNS equation mentioned in Theorem \ref{Thm-AKNS-solu}.
For this purpose, we take $m=n$. Double Wronskian solutions to equation
\eqref{nn-BSE-R} can be summarized in the following theorem.	
\begin{Thm}
\label{Thm-BSER-so}
Double Wronskian solutions of the RNNBS-NLS equation \eqref{nn-BSE-R} are given by
\begin{align}
\label{u-v1}
u=\frac{g}{f},\quad f=|\wh{\phi}^{(n)}; \wh{\psi}^{(n)}|,\quad g=2|\wh{\phi}^{(n+1)}; \wh{\psi}^{(n-1)}|,
\end{align}
in which $\phi$ and $\psi$ are the $2(n+1)$-th order column vectors defined by \eqref{phsi-solu1}, and satisfy the following relation
\begin{align}
\label{real-con}
\psi(x,y,t)=(-\sigma)^{n}T\phi(\sigma x,-y,-\sigma t),
\end{align}
where $T\in \mathbb{C}^{2(n+1)\times 2(n+1)}$ is a constant matrix satisfying the determining equations
\begin{align}
\label{real-kT}
K(t)T+\sigma T K(-\sigma t)=0,\quad T^2=\sigma\delta I,
\end{align}
and we require $\beta=T\alpha$.
\end{Thm}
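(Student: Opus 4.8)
The plan is to produce the reduced solution directly from the unreduced double Wronskian solution of Theorem~\ref{Thm-AKNS-solu} with $m=n$, by forcing the constraint \eqref{real-re1} at the level of the Wronskian entries. Since $u=g/f$ and $v=-h/f$ through \eqref{uv-tran}, the reduction \eqref{real-re1} is equivalent to the single scalar identity
\begin{align}
-\frac{h(x,y,t)}{f(x,y,t)}=\delta\,\frac{g(\sigma x,-y,-\sigma t)}{f(\sigma x,-y,-\sigma t)}. \nn
\end{align}
First I would impose the ansatz \eqref{real-con}, $\psi(x,y,t)=(-\sigma)^{n}T\phi(\sigma x,-y,-\sigma t)$, together with $\beta=T\alpha$, and check its compatibility with the condition equation set \eqref{CES1}. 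Differentiating the ansatz and inserting the $\phi$-equations \eqref{CES1-a} evaluated at $(\sigma x,-y,-\sigma t)$, the chain-rule factors $\sigma^{k}$ coming from $\partial_x^{k}$ and the sign flips from $y\to-y$, $t\to-\sigma t$ should make the $\psi_y=-\psi_{xx}$ and $\psi_t$-equations hold automatically (because $\sigma^{2}=1$), while the spatial relation $\psi_x=-K(t)\psi$ forces exactly the first determining equation $K(t)T+\sigma TK(-\sigma t)=0$. Feeding this relation back into the explicit forms \eqref{phsi-solu1} and conjugating $e^{\pm K x+\cdots}$ and $(-K)^{-n}$ by $T$ should confirm the ansatz is self-consistent precisely when $\beta=T\alpha$.

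The heart of the argument is the determinant identity. Under the ansatz, by the chain rule and \eqref{real-con} each $\phi$-column $(\partial_x^{k}\phi)(\sigma x,-y,-\sigma t)$ of a transformed determinant becomes $\sigma^{k}(-\sigma)^{-n}T^{-1}\partial_x^{k}\psi(x,y,t)$, while each $\psi$-column $(\partial_x^{k}\psi)(\sigma x,-y,-\sigma t)$ becomes $\sigma^{k}(-\sigma)^{n}T\,\partial_x^{k}\phi(x,y,t)$. I would then pull the scalar factors out columnwise and use $T^{2}=\sigma\delta I$ to rewrite every $T$-column as $T^{-1}$ times $\sigma\delta$, so that the common matrix $T^{-1}$ factors out as $\det(T^{-1})$ and what survives is, up to an interchange of the two Wronskian blocks, exactly $f(x,y,t)$, respectively $h(x,y,t)$. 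This should yield $f(\sigma x,-y,-\sigma t)=C_f\,f(x,y,t)$ and $g(\sigma x,-y,-\sigma t)=C_g\,h(x,y,t)$ with explicit constants $C_f,C_g$ assembled from $\det(T^{-1})$, powers of $-\sigma\delta$, and the block-swap signs. Substituting into the scalar identity above collapses it to $C_g/C_f=-\delta$, which I expect to be an identity once all powers of $\sigma$, $-1$ and $\sigma\delta$ are tallied.

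The main obstacle is precisely this sign-and-scalar bookkeeping in the determinant step. Three independent sources of signs must be tracked and combined: the chain-rule powers $\sigma^{k}$ together with the prefactor $(-\sigma)^{\pm n}$ summed over the $n+1$ (or $n+2$, $n-1$) columns of each block; the interchange sign $(-1)^{(n+1)^2}$, respectively $(-1)^{n(n+2)}$, incurred when the $\phi$- and $\psi$-blocks are swapped so the reduced Wronskian matches $f$ or $h$; and the power of $\sigma\delta$ produced when $T^{2}=\sigma\delta I$ is applied to the appropriate columns. I would use the parities $n(n+1)\equiv0$ and $n(n-1)\equiv0\pmod 2$ repeatedly to collapse these. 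Verifying that they conspire to give $C_f=\det(T^{-1})(-\sigma\delta)^{n+1}$ and $C_g=\sigma\det(T^{-1})(-\sigma\delta)^{n}$, hence $C_g/C_f=\sigma/(-\sigma\delta)=-\delta$, is the one computation that genuinely explains why the second determining equation must read $T^{2}=\sigma\delta I$ rather than some other sign normalization; this is where I would be most careful.
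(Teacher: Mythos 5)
Your proposal is correct and takes essentially the same route as the paper's proof: establish the relation \eqref{real-con} from the first determining equation together with \eqref{phsi-solu1} and $\beta=T\alpha$, then perform the column-wise sign and scalar bookkeeping (chain-rule powers of $\sigma$, block swap, and $T^2=\sigma\delta I$) to get $f(\sigma x,-y,-\sigma t)=C_f\,f(x,y,t)$ and $g(\sigma x,-y,-\sigma t)=C_g\,h(x,y,t)$, and conclude $u(\sigma x,-y,-\sigma t)=\delta v(x,y,t)$. Your constants $C_f=\det(T^{-1})(-\sigma\delta)^{n+1}$ and $C_g=\sigma\det(T^{-1})(-\sigma\delta)^{n}$ coincide with the paper's $(-1)^{(n+1)^2}|T|(\sigma\delta)^{n+1}$ and $(-1)^{n(n+2)}|T|\sigma(\sigma\delta)^{n+2}$, since $T^2=\sigma\delta I$ forces $|T|=|T|^{-1}=\pm 1$ and $(\sigma\delta)^2=1$.
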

	
\begin{proof}
Under the first determining equation in \eqref{real-kT}, we know
\begin{align}
\psi(x,y,t) &=\left(-K(t)\right)^{-n}e^{-K(t)x-K^2(t)y} \beta \nn\\
&=\left(\sigma TK(-\sigma t) T^{-1})\right)^{-n}e^{\sigma TK(-\sigma t) T^{-1}x-TK^2(-\sigma t)T^{-1}y} \beta \nn \\
&=(-\sigma)^{-n} T \left(-K(-\sigma t)\right)^{-n} e^{\sigma K(-\sigma t) x+K^2(-\sigma t)(-y)} T^{-1}\beta \nn\\
&=(-\sigma)^{n}T\phi(\sigma x,-y,-\sigma t),
\end{align}	
which means that relation \eqref{real-con} holds. To identify the connections among variables $f$, $g$ and $h$, we have to rewrite them as
\begin{subequations}
\label{fgh-x}
\begin{align}
& \label{f-x}
f=|\wh{\phi}^{(n)}; \wh{\psi}^{(n)}|=|\wh{\phi}^{(n)}(x,y,t)_{[x]};(-\sigma)^{n}T\wh{\phi}^{(n)}(\sigma x,-y,-\sigma t)_{[x]}|, \\
& \label{g-x}
g=2|\wh{\phi}^{(n+1)}; \wh{\psi}^{(n-1)}|=2|\wh{\phi}^{(n+1)}(x,y,t)_{[x]};(-\sigma)^{n}T\wh{\phi}^{(n-1)}(\sigma x,-y,-\sigma t)_{[x]}|, \\
& \label{h-x}
h=2|\wh{\phi}^{(n-1)}; \wh{\psi}^{(n+1)}|=2|\wh{\phi}^{(n-1)}(x,y,t)_{[x]};(-\sigma)^{n}T\wh{\phi}^{(n+1)}(\sigma x,-y,-\sigma t)_{[x]}|,
\end{align}
\end{subequations}
where we have introduced notation
\begin{align}
\wh{\phi}^{(s)}(ax)_{[bx]}=(\phi(ax), \partial_{bx}\phi(ax),\ldots,\partial^s_{bx}\phi(ax))_{2(n+1)\times 2(s+1)}, \quad a,~b=\pm 1.
\end{align}
Noticing the second determining equation in \eqref{real-kT}, we find
\begin{align}
\label{relation-ff}
f(\sigma x,-y,-\sigma t)=&|\wh{\phi}^{(n)}(\sigma x,-y,-\sigma t)_{[\sigma x]}; (-\sigma)^{n}T\wh{\phi}^{(n)}(x,y,t)_{[\sigma x]}| \nn\\
=& (-1)^{(n+1)^2}|T||\wh{\phi}^{(n)}(x,y,t)_{[x]};(-\sigma)^{n}T^{-1}\wh{\phi}^{(n)}(\sigma x,-y,-\sigma t)_{[x]}|\nn \\	
=&(-1)^{(n+1)^2}|T|(\sigma \delta)^{(n+1)}|\wh{\phi}^{(n)}(x,y,t)_{[x]};(-\sigma)^{n}T^{-1}\wh{\phi}^{(n)}(\sigma x,-y,-\sigma t)_{[x]}| \nn \\
=&(-1)^{(n+1)^2}|T|(\sigma\delta)^{n+1}f(x,y,t),
\end{align}
as well as
\begin{align}
\label{relation-gh}
g(\sigma x,-y,-\sigma t)=&2|\wh{\phi}^{(n+1)}(\sigma x,-y,-\sigma t)_{[\sigma x]}; (-\sigma)^{n}T\wh{\phi}^{(n-1)}(x,y,t)_{[\sigma x]}| \nn\\
=& 2(-1)^{n(n+2)}|T|\sigma^{2n+1}|\wh{\phi}^{(n-1)}(x,y,t)_{[x]};(-\sigma)^{n}T^{-1}\wh{\phi}^{(n+1)}(\sigma x,-y,-\sigma t)_{[x]}|\nn \\	
=& 2(-1)^{n(n+2)}|T|\sigma^{2n+1}(\sigma\delta)^{n+2}|\wh{\phi}^{(n-1)}(x,y,t)_{[x]};(-\sigma)^{n}T^{-1}\wh{\phi}^{(n+1)}(\sigma x,-y,-\sigma t)_{[x]}| \nn \\
=&(-1)^{n(n+2)}|T|\sigma(\sigma \delta)^{n+2}h(x,y,t).
\end{align}
Thus utilizing transformation \eqref{uv-tran}, we have
\begin{align*}
u(\sigma x,-y,-\sigma t)=&\frac{g(\sigma x,-y,-\sigma t)}{f(\sigma x,-y,-\sigma t)}
=\frac{(-1)^{n(n+2)}|T|\sigma(\sigma \delta)^{n+2}h(x,y,t)}{(-1)^{(n+1)^2}|T|(\sigma\delta)^{n+1}f(x,y,t)}\nn \\
=&-\delta \frac{h(x,y,t)}{f(x,y,t)}=\delta v(x,y,t),
\end{align*}
which coincides with the reduction \eqref{real-re1} for the NBS-AKNS equation \eqref{n-(2+1)-DBS}.
Therefore, we complete the verification.
\end{proof}

\noindent {\bf Remark 1}: {\it From the Theorem \ref{Thm-BSER-so}, we know that
the formal double Wronskian solution to the RNNBS-NLS equation \eqref{nn-BSE-R} is
expressed by $u=\frac{g}{f}$ with
\begin{align*}
f=& |\wh{\phi}^{(n)}(x,y,t);(-\sigma)^{n}T\wh{\phi}^{(n)}(\sigma x,-y,-\sigma t)|,\nn \\
g=& 2|\wh{\phi}^{(n+1)}(x,y,t);(-\sigma)^{n}T\wh{\phi}^{(n-1)}(\sigma x,-y,-\sigma t)|,
\end{align*}
where $\phi$ is given by \eqref{phsi-solu1} and $K(t)$ and $T$ satisfy the constraint relations \eqref{real-kT}}.
	
\subsection{Some examples of solutions} \label{sec-3.2.2}

To give the explicit expressions of solution $u$, one needs to solve the determining
equations \eqref{real-kT}, where $K(t)$ satisfies $K_{t}(t)=-K^2(t)$. To do so, we divide
$K(t)$ and $T$ as
\begin{align}
\label{real-kT-def}
K(t)=\left(\begin{array}{cc}
K_1(t) & \bm 0 \\
\bm 0 & K_2(t) \\
\end{array}\right),
\quad
T=\left(\begin{array}{cc}
T_1 & T_2 \\
T_3 & T_4 \\
\end{array}\right),
\end{align}
with $K_i(t),~~T_j \in \mathbb{C}^{(n+1)\times(n+1)},~i=1,2,j=1,2,3,4$.
Substituting \eqref{real-kT-def} into equations \eqref{real-kT} and we can directly get the solutions for $K(t)$ and $T$.
We list solutions to \eqref{real-kT} with different $(\sigma,\delta)$ in Table 1.

\begin{center}
\footnotesize \setlength{\tabcolsep}{8pt}
\renewcommand{\arraystretch}{1.5}
\begin{tabular}[htbp]{|l|l|l|l|}
\hline
$(\sigma,\delta)$& $K(t)$ & $T$\\
\hline
$(1,-1)$ &\eqref{real-kT-def} \mbox{with} $K_2(t)=-K_1(-t)$ & $T_1=T_4=\bm 0, \quad T_3=-T_2=I$ \\
\hline
$(1,1)$  & \eqref{real-kT-def} \mbox{with} $K_2(t)=-K_1(-t)$ & $T_1=T_4=\bm 0, \quad T_3=T_2=I$ \\
\hline
$(-1,-1)$ &\eqref{real-kT-def} & $T_1=-T_4=I, \quad T_3=T_2=\bm 0$ \\
\hline
$(-1,1)$ & \eqref{real-kT-def} & $T_1=-T_4=iI, \quad T_3=T_2=\bm 0$ \\ \hline
\end{tabular}
\end{center}
\begin{center}
\begin{minipage}{7cm}{\footnotesize
{Table 1. \emph{$K(t)$ and $T$ for equation \eqref{real-kT}.}}}
\end{minipage}
\end{center}
	
In what follows, we consider soliton solutions and Jordan-block solutions for the RNNBS-NLS equation \eqref{nn-BSE-R}.
For the sake of brevity, we introduce some notations
\begin{align}
\label{pqtheta}
p_i=(t-a_i)^{-1}, \quad q_i=(t+a_i)^{-1},\quad r_i=(t-b_i)^{-1},\quad \theta_i=\frac{\wt{\alpha}_{i}}{\alpha_i},
\end{align}
where ${\wt{\alpha}_{i}=\alpha_{n+1+i}}$ and ${a_i,~b_i}$ with $i=1,2,\ldots,n+1$ are real constants.

\noindent \textbf{Soliton solutions:} In this situation, we set $K_1(t)$ as a diagonal matrix
\begin{align}
K_1=\mbox{Diag}(p_1,~p_2,~\ldots,~p_{n+1}).
\end{align}
When $(\sigma,\delta)=(1, \pm 1)$, matrix $K_2(t)$ has to be the form of
$K_2=\mbox{Diag}(q_1,~q_2,~\ldots,~q_{n+1})$. While in the case of $(\sigma,\delta)=(-1, \pm 1)$, matrix $K_2(t)$ can be independent of
matrix $K_1(t)$, and we take $K_2=\mbox{Diag}(r_1,~r_2,~\ldots,~r_{n+1})$.
	
\noindent {\bf Remark 2}: {\it Because of the block structure of matrix $T$, one can easily observe that $\alpha$ can
be gauged to be $(1,1,\ldots,1;1,1,\dots,1)^T$. It implies that solutions obtained for the case
$(\sigma=-1,\delta=\pm 1)$ are independent of phase parameters in $\alpha$, i.e., the
initial phase has always to be zero. Soliton solutions and Jordan-block solutions listed
below can demonstrate this character.}
	
When $n=0$, we list the one-soliton solution
\begin{subequations}
\label{u-one-solu3}
\begin{align}
& \label{u(1,-1)}
u_{\sigma=1,\delta=-1}=(q_1-p_1)e^{(p_1^2+q_1^2)y}\sech((q_1-p_1)x+\ln\theta_1),\\
& \label{u(1,1)}
u_{\sigma=1,\delta=1}=(p_1-q_1)e^{(p_1^2+q_1^2)y}\csch((q_1-p_1)x+\ln\theta_1),\\
& \label{u(-1,-1)}
u_{\sigma=-1,\delta=-1}=\frac{2(p_1-r_1)}{e^{-2p_1x-2p_1^2y}+e^{-2r_1x-2r_1^2y}},\\
& \label{u(-1,1)}
u_{\sigma=-1,\delta=1}=-\frac{2i(p_1-r_1)}{e^{-2p_1x-2p_1^2y}+e^{-2r_1x-2r_1^2y}}.
\end{align}
\end{subequations}
Let's briefly identify the dynamics of solution \eqref{u(1,-1)} .
This is a moving wave with an initial phase $\ln\theta_1$ and a $(y,t)$-dependent
amplitude $(q_1-p_1)e^{(p_1^2+q_1^2)y}$. Besides, $x(t)=(p_1-q_1)^{-1}\ln\theta_1$ and $\frac{dx(t)}{dt}=(q_1^2-p_1^2)\ln\theta_1$,
respectively, denote time-varying top trajectory and propagation velocity. As $\theta_1=1$, the wave \eqref{u(1,-1)} is stationary.
Fixing $y$, the top trace of one-soliton in coordinate frame $\{x,t\}$ is depicted as Fig. 1.

%========================Fig 1=====================================	
\begin{center}
\begin{picture}(120,100)
\put(-120,-23){\resizebox{!}{4cm}{\includegraphics{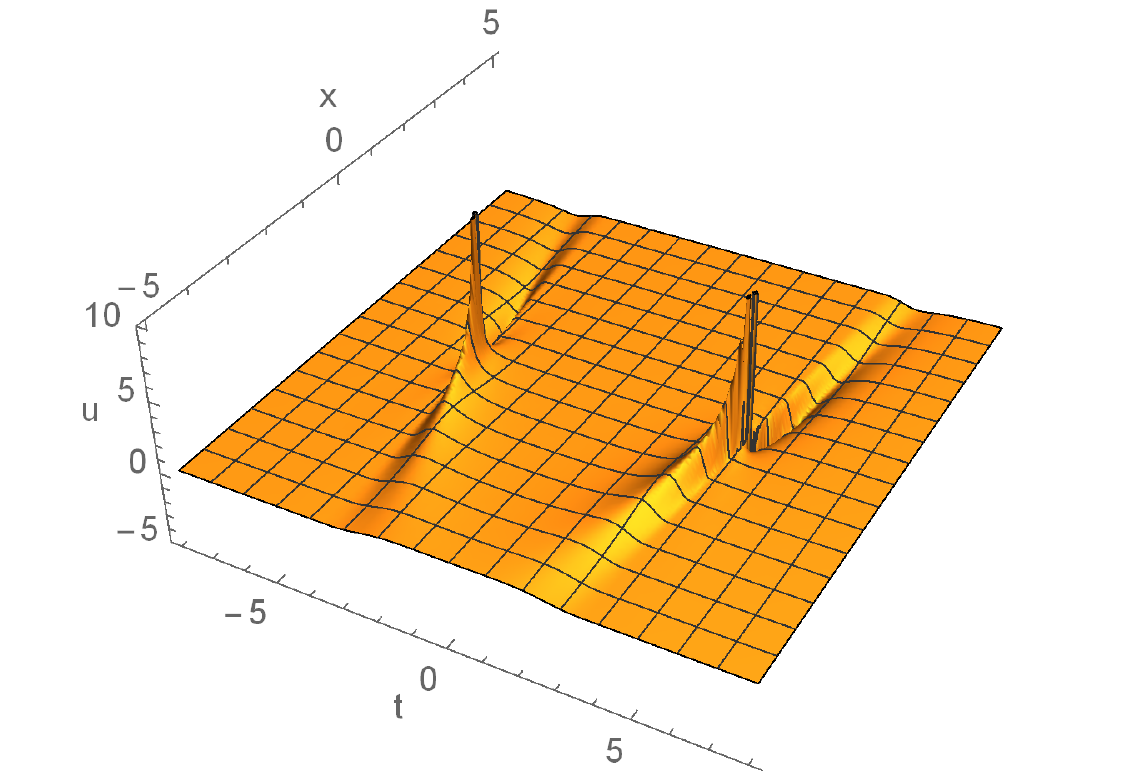}}}
\put(100,-23){\resizebox{!}{4cm}{\includegraphics{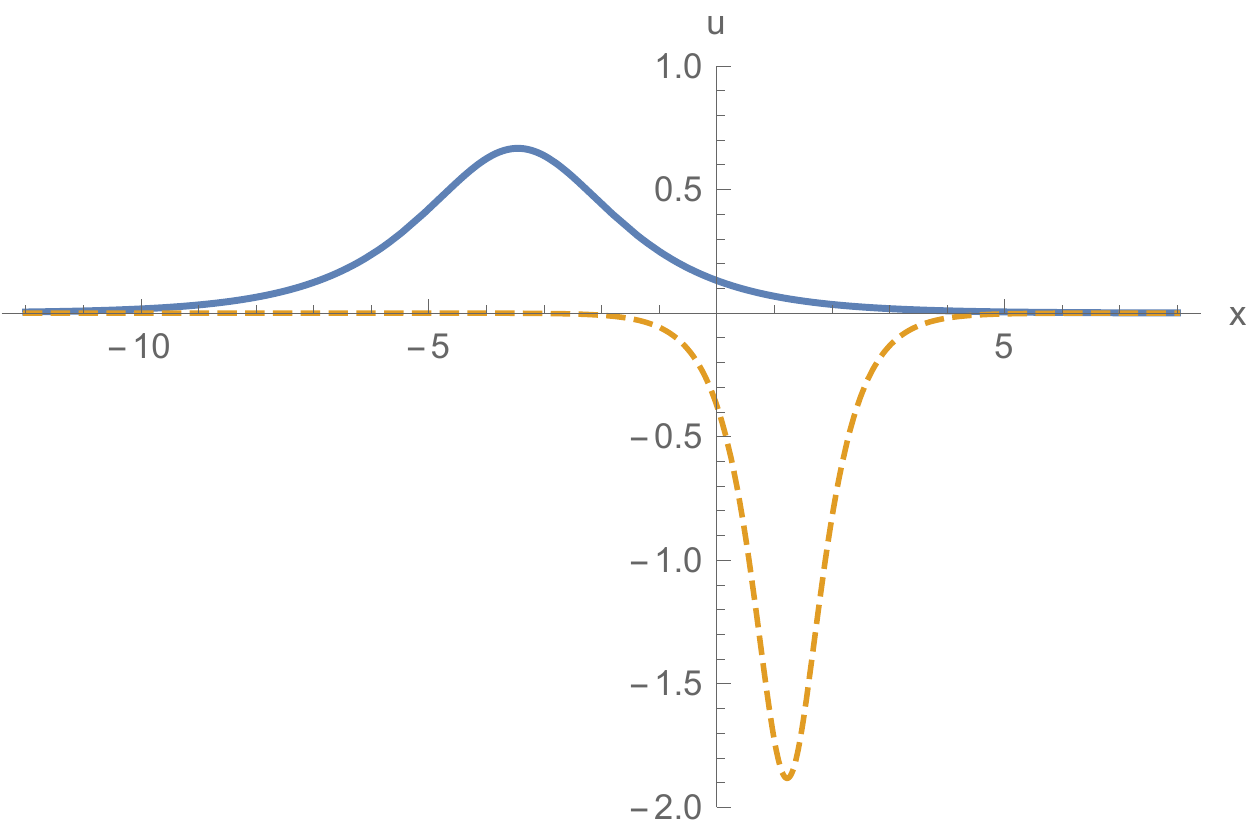}}}
\end{picture}
\end{center}
\vskip 20pt
\begin{center}
\begin{minipage}{15cm}{\footnotesize
\quad\qquad\qquad\qquad\qquad\qquad(a)\qquad\qquad\qquad\qquad\qquad\qquad\qquad\qquad\qquad\qquad\quad \qquad \quad (b)}
\end{minipage}
\end{center}
\vskip 10pt
\begin{center}
\begin{picture}(120,80)
\put(-120,-23){\resizebox{!}{4cm}{\includegraphics{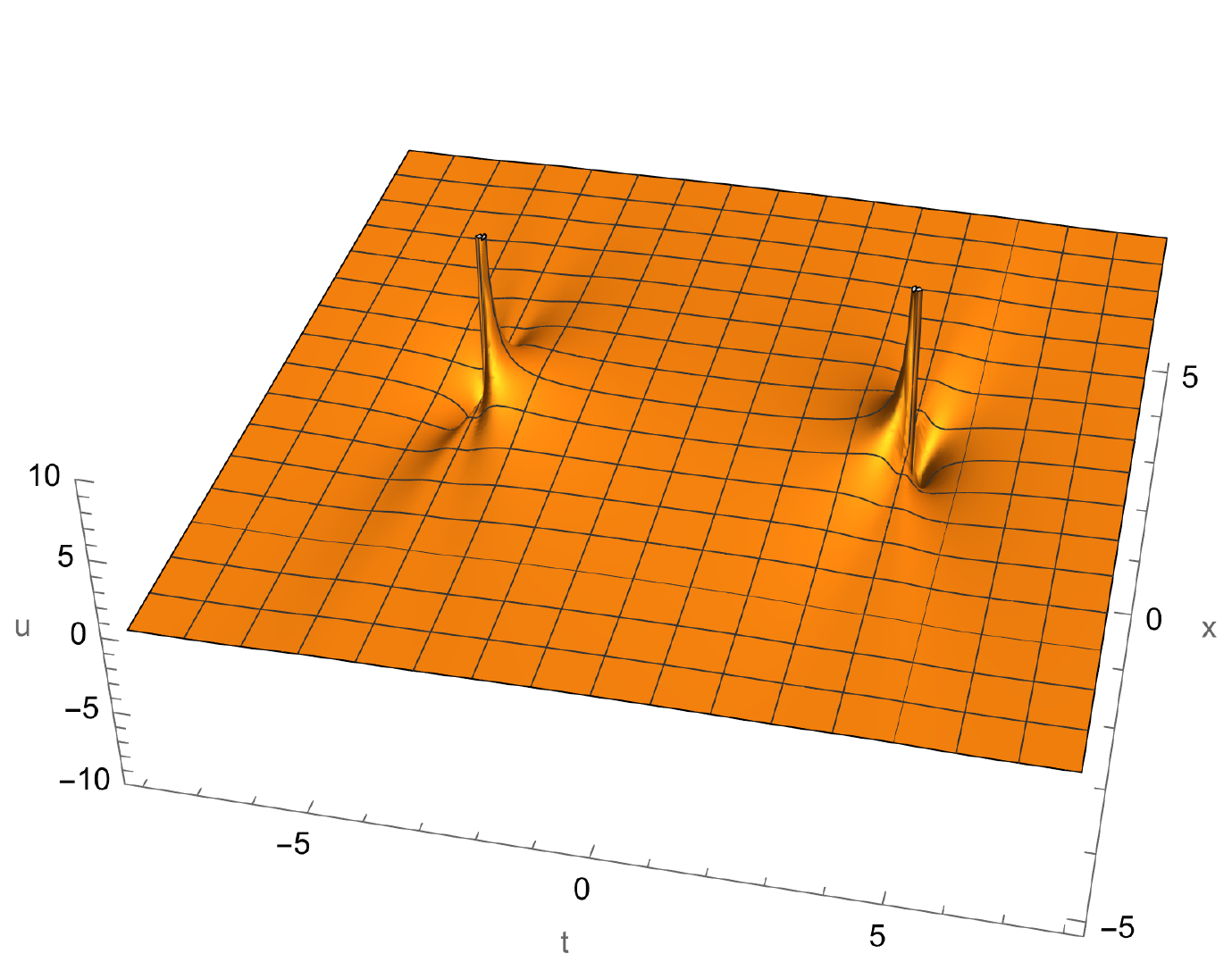}}}
\put(100,-23){\resizebox{!}{4cm}{\includegraphics{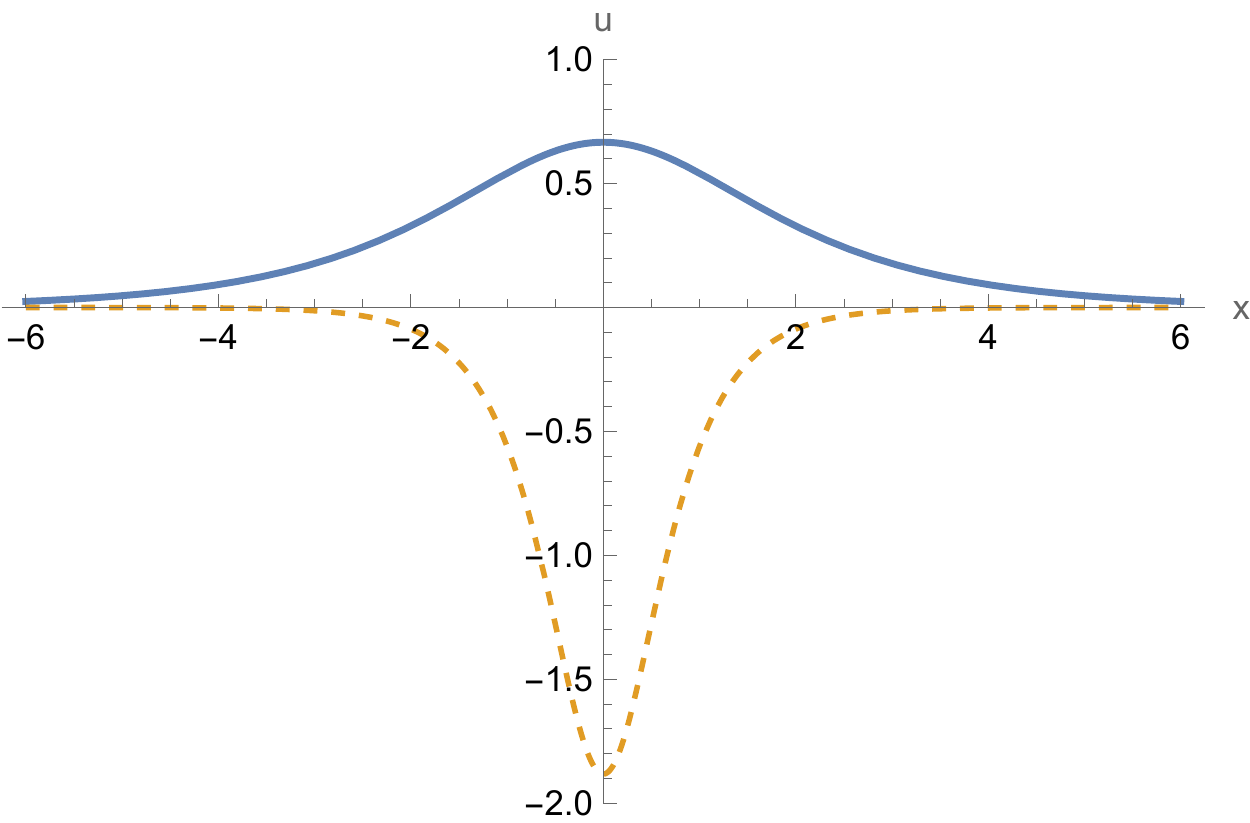}}}
\end{picture}
\end{center}
\vskip 20pt
\begin{center}
\begin{minipage}{15cm}{\footnotesize
\quad\qquad\qquad\qquad\qquad\qquad\qquad(c)\qquad\qquad\qquad\qquad\qquad\qquad\qquad\qquad\qquad\qquad\quad \qquad \quad (d)\\
{\bf Fig. 1} shape and motion of $u$ given by \eqref{u(1,-1)} for $a_1=4$ and $y=0$.
(a) a moving wave for $\theta_1=10$.
(b) waves in solid line and dotted line stand for plot (a) at $t=2$ and $t=4.5$, respectively.
(c) a stationary wave for $\theta_1=1$.
(d) waves in solid line and dotted line stand for plot (c) at $t=2$ and $t=4.5$, respectively.}
\end{minipage}
\end{center}
%========================Fig 1=====================================	

When $n=1$, one can get the two-soliton solutions, which read
\begin{subequations}
\label{u-two-solu}
\begin{align}
& \label{2-u(1,-1)}
u_{\sigma=1,\delta=-1}=\frac{\alpha_1\wt{\alpha}_{1}A_1(\alpha_2^2A_2+\wt{\alpha}_{2}A_2^{p\leftrightarrow q})
+\alpha_2\wt{\alpha}_{2}A_{1,{1\leftrightarrow 2}}(\alpha_{1}^2A_{2,{1\leftrightarrow 2}}+{\wt{\alpha}}_{1}^2\bar{A_2})}
{\alpha_{1}^2\alpha_{2}^2A_3+\wt{\alpha}_{1}^2\wt{\alpha}_{2}^2A^{p\leftrightarrow q}_{3}+\wt{\alpha}_{1}^2\alpha_{2}^2A_4
+\alpha_{1}^2{\wt{\alpha}}_{2}^2A_{4,{1\leftrightarrow 2}}-A_5},\\
& \label{2-u(1,1)}
u_{\sigma=1,\delta=1}=\frac{\alpha_1\wt{\alpha}_{1}A_1(\alpha_2^2A_2-\wt{\alpha}_{2}A_2^{p\leftrightarrow q})
+\alpha_2\wt{\alpha}_{2}A_{1,{1\leftrightarrow 2}}(\alpha_{1}^2A_{2,{1\leftrightarrow 2}}-{\wt{\alpha}}_{1}^2\bar{A_2})}
{\alpha_{1}^2\alpha_{2}^2A_3+\wt{\alpha}_{1}^2\wt{\alpha}_{2}^2A^{p\leftrightarrow q}_{3}-
\wt{\alpha}_{1}^2\alpha_{2}^2A_4-\alpha_{1}^2\wt{\alpha}_{2}^2A_{4,{1\leftrightarrow 2}}+A_5},\\
& \label{2-u(-1,-1)}
u_{\sigma=-1,\delta=-1}=-\frac{2B_1(B_2-B_3)-2B_1^{p\leftrightarrow r}(B_2^{p\leftrightarrow r}-B_3^{p\leftrightarrow r})}{B_4+B_5-B_6}, \\
& \label{2-(-1,1)}
u_{\sigma=-1,\delta=1}=\frac{2iB_1(B_2-B_3)-2iB_1^{p\leftrightarrow r}(B_2^{p\leftrightarrow r}-B_3^{p\leftrightarrow r})}{B_4+B_5-B_6},
\end{align}	
\end{subequations}
in which
\begin{subequations}
\begin{align}
&A_1=2p_1q_1(p_1-q_1)e^{(p_1+q_1)x+(p_1^2+q_1^2)y}, \quad A_2=q_2^2(p_1-p_2)(p_2-q_1)e^{2p_2x},\\
&A_3=q_1^2q_2^2(p_1-p_2)^2e^{2(p_1+p_2)x}, \quad A_4=p_1^2q_2^2(p_2-q_1)^2e^{2(p_2+q_1)x},\\
&A_5=2p_1p_2q_1q_2\alpha_{1}\alpha_{2}\wt{\alpha}_{1}\wt{\alpha}_{2}(p_1-q_1)(p_2-q_2)e^{(p_1+p_2+q_1+q_2)x}\cosh((p_1^2-p_2^2+q_1^2-q_2^2)y),\\
&B_1=(p_1-p_2)e^{2(p_1+p_2)x+2(p_1^2+p_2^2)y}, \quad B_2=(p_1-r_1)(p_2-r_1)e^{2r_1x+2r_1^2y},\\
&B_3=(p_1-r_2)(p_2-r_2)e^{2r_2x+2r_2^2y},\\
&B_4=(p_1-r_1)(p_2-r_2)(e^{2(p_1+r_1)x+2(p_1^2+r_1^2)y}+e^{2(p_2+r_2)x+2(p_2^2+r_2^2)y}),\\
&B_5=(p_1-p_2)(r_1-r_2)(e^{2(p_1+p_2)x+2(p_1^2+p_2^2)y}+e^{2(r_1+r_2)x+2(r_1^2+r_2^2)y}),\\
&B_6=(p_2-r_1)(p_1-r_2)(e^{2(p_2+r_1)x+2(p_2^2+r_1^2)y}+e^{2(p_1+r_2)x+2(p_1^2+r_2^2)y}),
\end{align}	
\end{subequations}
where $G_j^{p\leftrightarrow q}$ means that $p_{i}$ and $q_{i}$ in $G_j$ exchange each other;
$G_{j,{1\leftrightarrow 2}}$ represents the exchange of subscript, e.g. $p_1 \rightleftharpoons p_2$;
$\bar{G_j}$ contains both transformations of $G_j^{p\leftrightarrow q}$ and $G_{j,{1\leftrightarrow 2}}$.

We next pay attention to the two-soliton solutions \eqref{2-u(1,-1)}. Since $p_{i}$ and $q_{i}$ are functions of $t$, it is
intractable to make asymptotic analysis as usual \cite{Hie-book}. Here, we only depict \eqref{2-u(1,-1)} in Fig. 2.

%========================Fig 2=====================================		
\begin{center}
\begin{picture}(120,100)
\put(-160,-23){\resizebox{!}{4cm}{\includegraphics{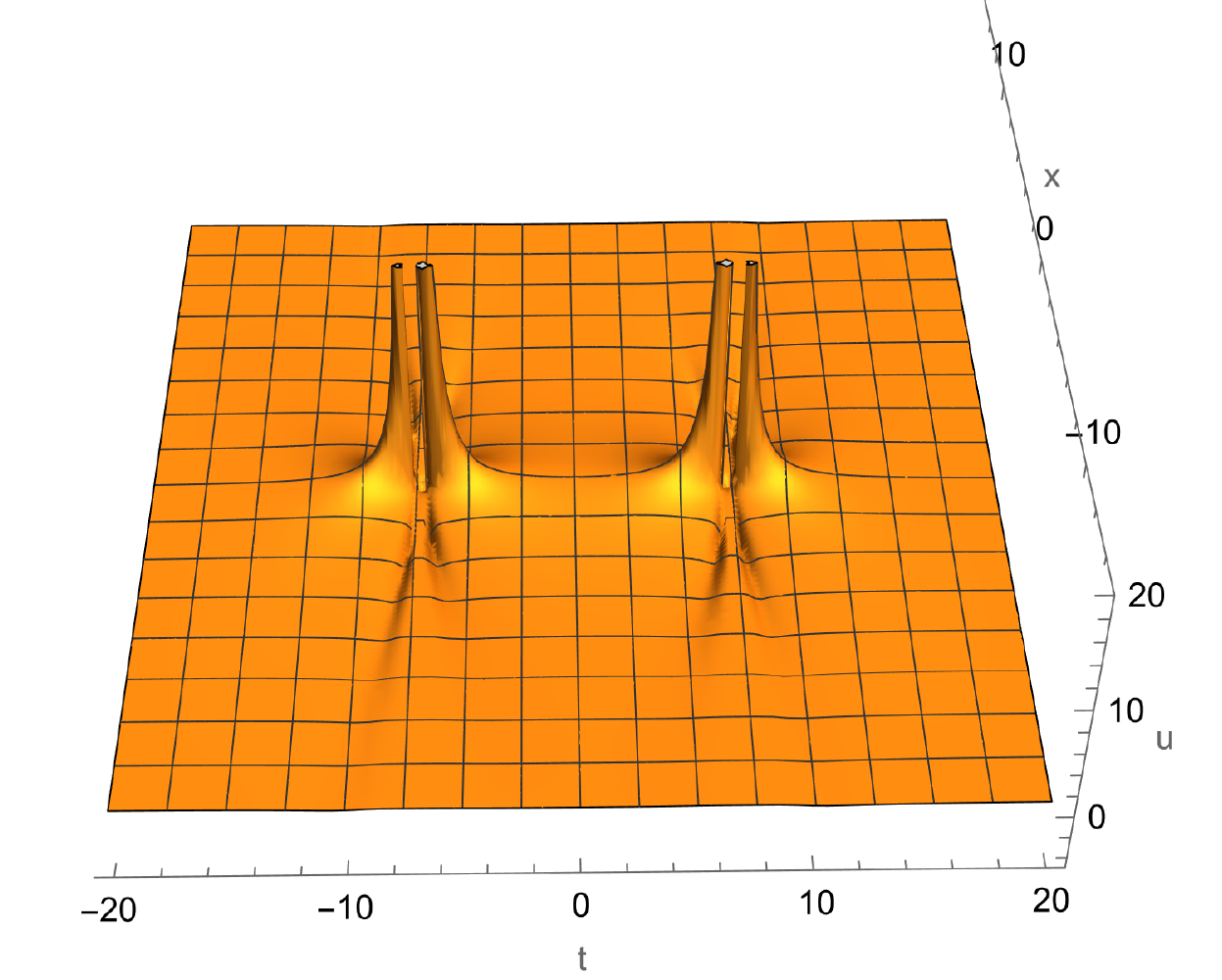}}}
\put(-10,-23){\resizebox{!}{4cm}{\includegraphics{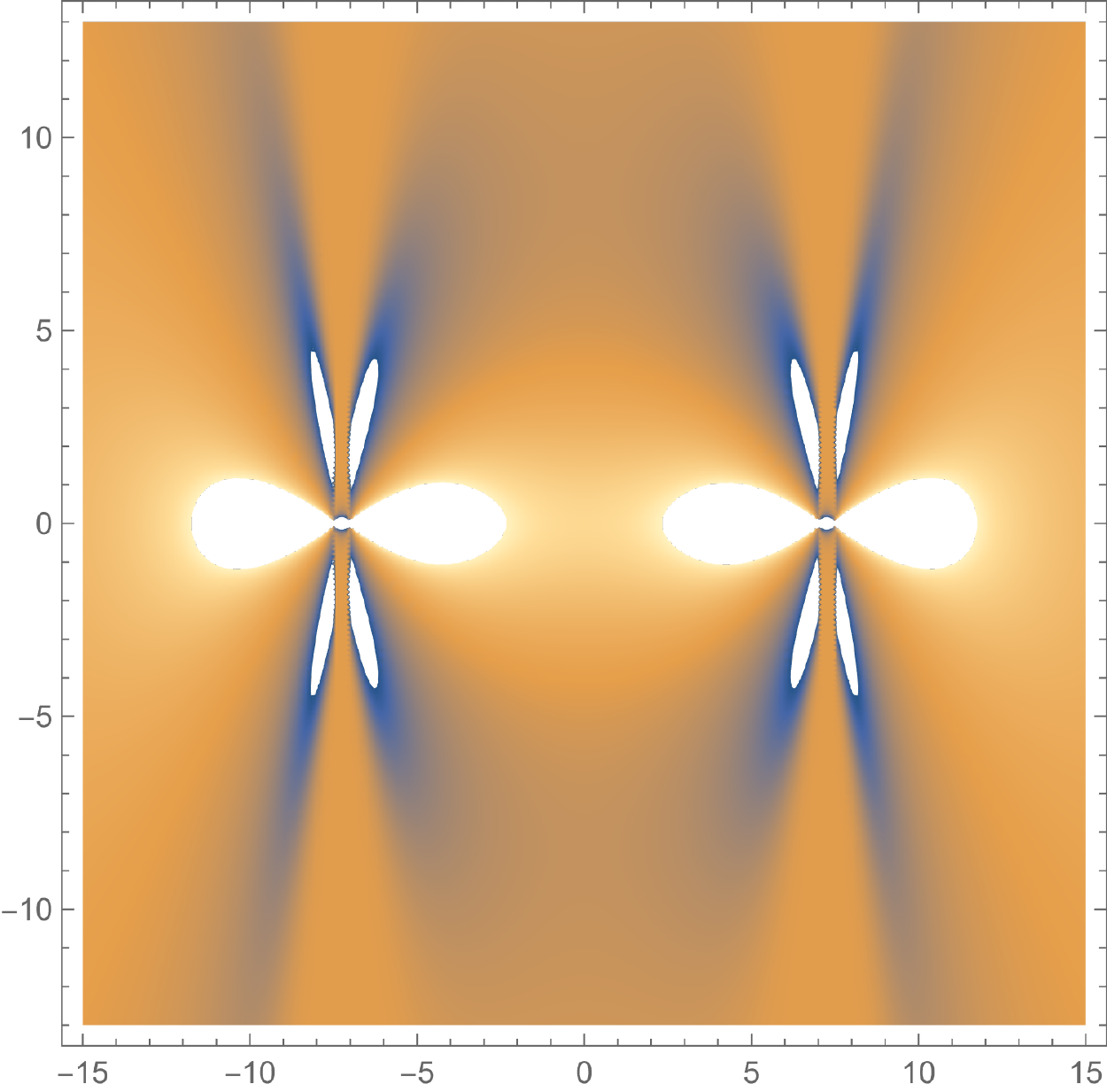}}}
\put(130,-23){\resizebox{!}{4cm}{\includegraphics{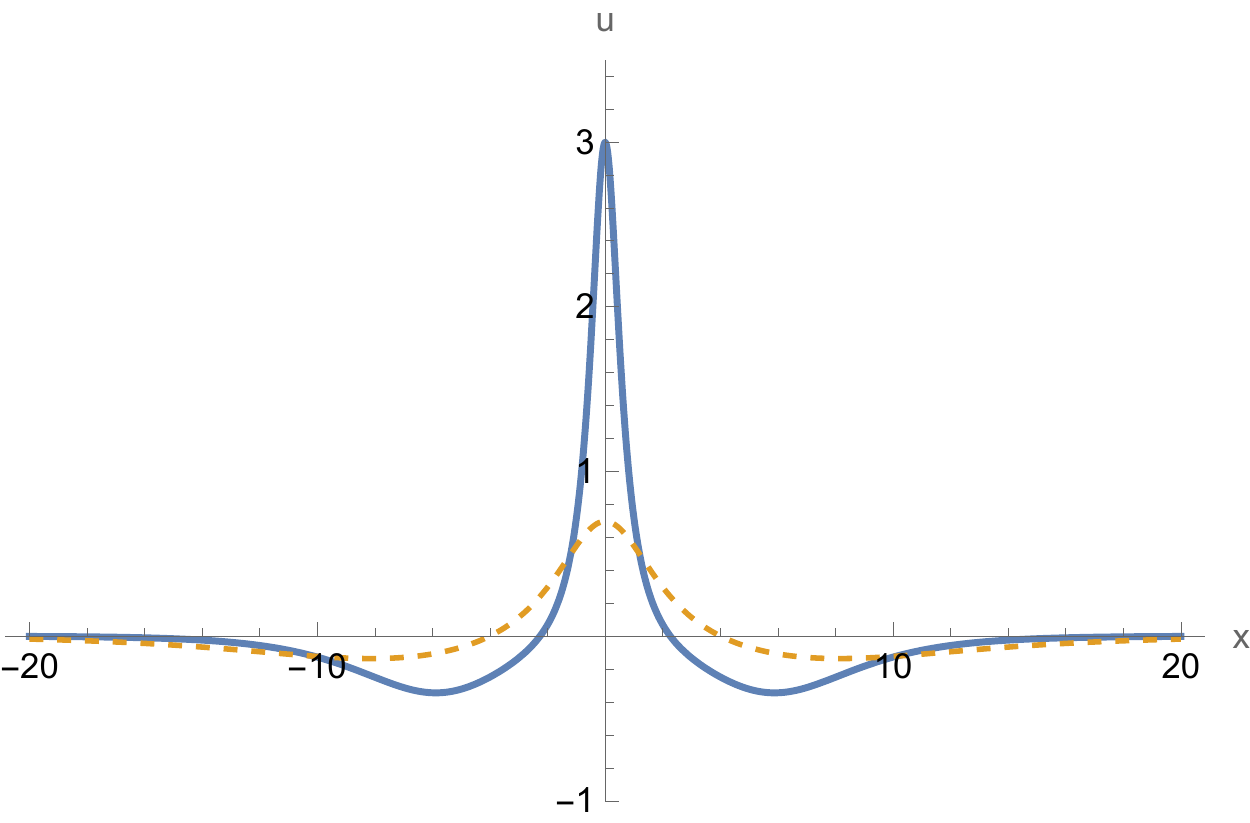}}}
\end{picture}
\end{center}
\vskip 20pt
\begin{center}
\begin{minipage}{16cm}{\footnotesize
\quad\qquad\qquad\qquad(a)\quad\qquad\qquad\qquad \qquad\quad \qquad \qquad \quad (b) \qquad\qquad \qquad \qquad\qquad\qquad\qquad \quad (c)}
\end{minipage}
\end{center}
\vskip 10pt
\begin{center}
\begin{picture}(120,80)
\put(-160,-23){\resizebox{!}{4cm}{\includegraphics{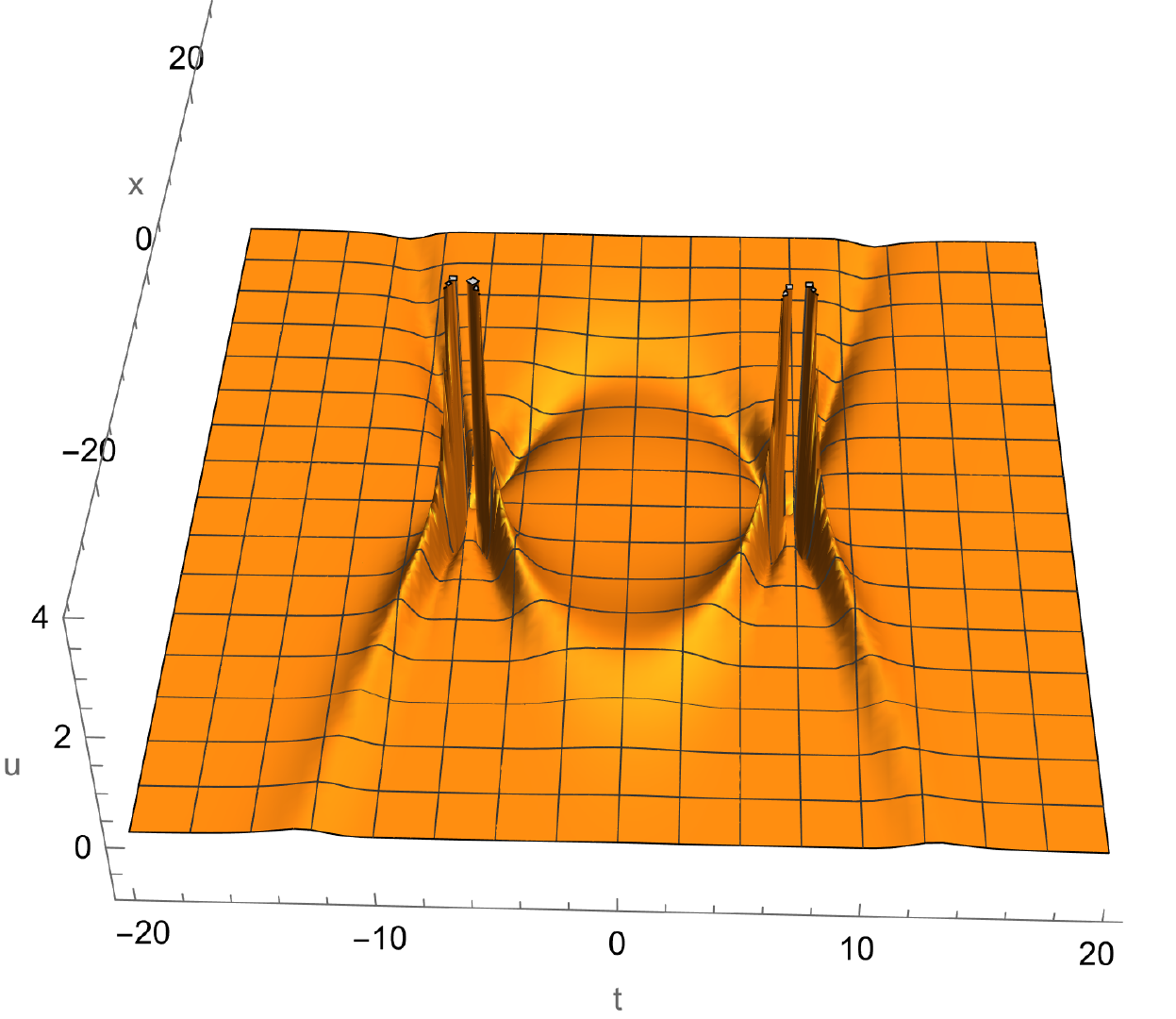}}}
\put(-10,-23){\resizebox{!}{4cm}{\includegraphics{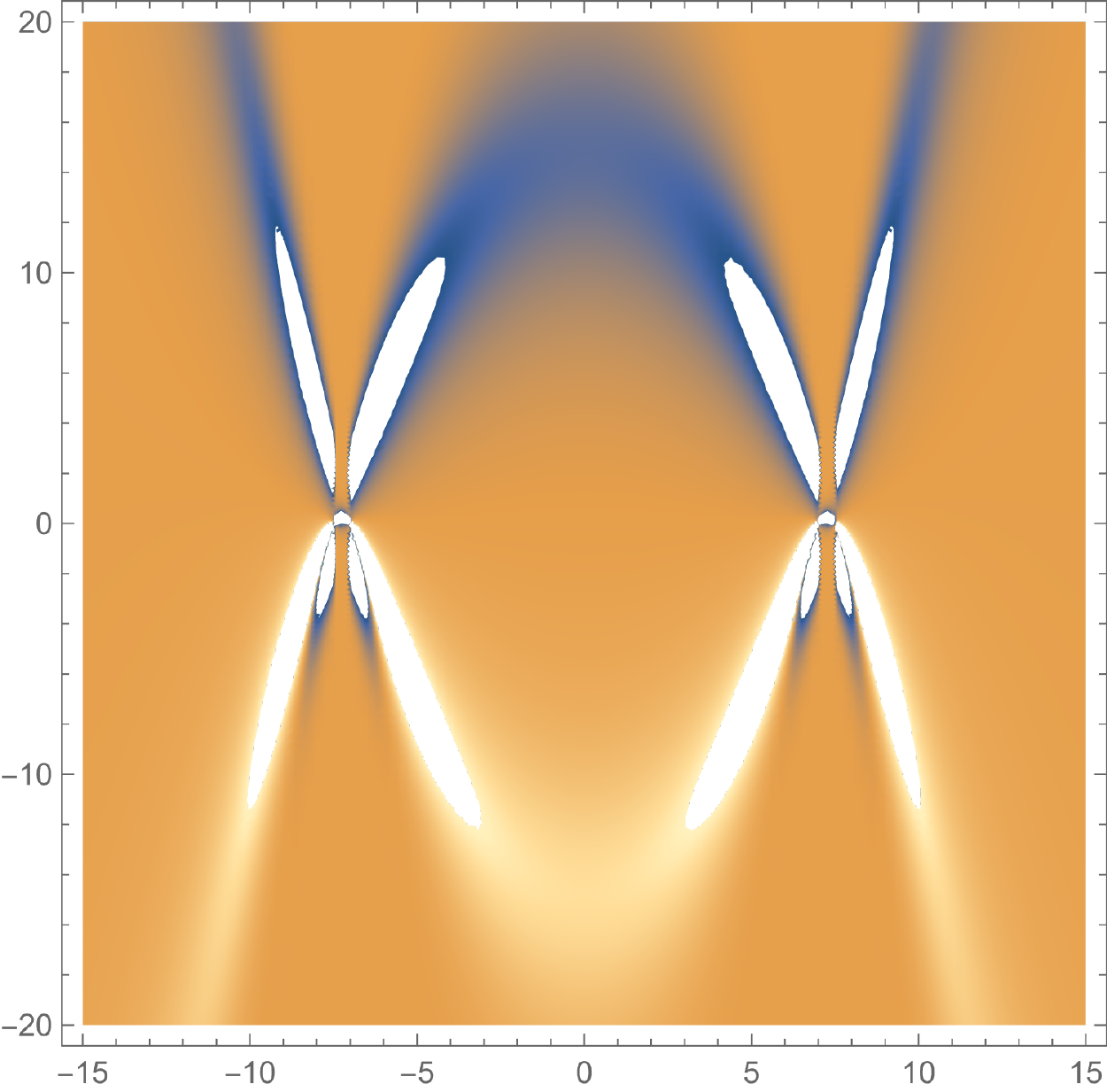}}}
\put(130,-23){\resizebox{!}{4cm}{\includegraphics{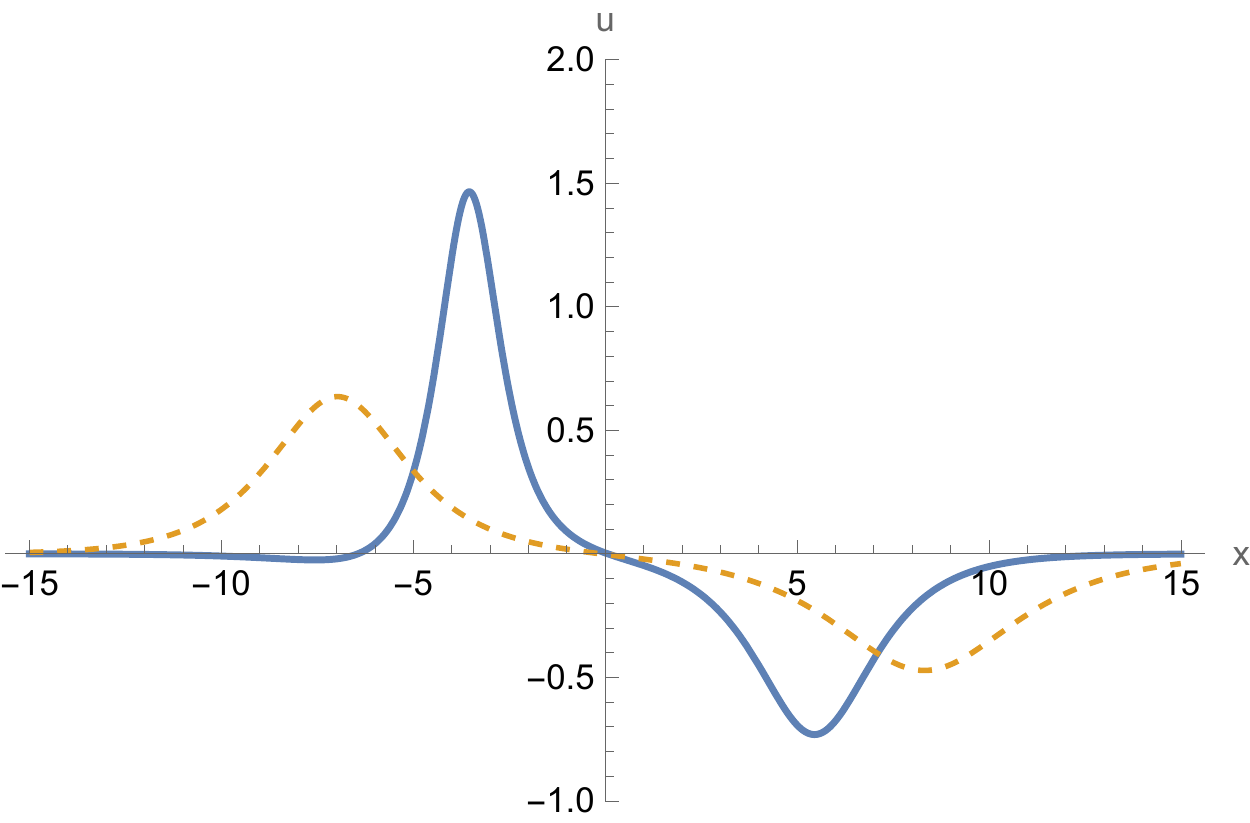}}}
\end{picture}
\end{center}
\vskip 20pt
\begin{center}
\begin{minipage}{16cm}{\footnotesize
\quad\qquad\qquad\qquad(d)\quad\qquad\qquad\qquad \qquad\quad \qquad \qquad \quad (e) \qquad\qquad \qquad \qquad\qquad\qquad\qquad \quad (f)\\
{\bf Fig. 2} (a) shape and motion of two-soliton solutions $u$ given by \eqref{2-u(1,-1)}
for $a_1=7,~a_2=7.5$, ~$\alpha_1=\alpha_2=\wt{\alpha}_{1}=\wt{\alpha}_{2}=1$ and $y=0$.
(b) a contour plot of (a) with range $x\in [-13, 13]$ and $t\in [-15,15]$.
(c) waves in solid line and dotted line stand for plot (a) at $t=5$ and $t=2$, respectively.
(d) shape and motion of two-soliton solutions $u$ given by \eqref{2-u(1,-1)} for $a_1=7,~a_2=7.5$,
$\alpha_1=1,~\alpha_2=2,~\wt{\alpha}_{1}=3, ~\wt{\alpha}_{2}=1$ and $y=0$.
(e) a contour plot of (d) with range $x\in [-20, 20]$ and $t\in [-15,15]$.
(f) waves in solid line and dotted line stand for plot (d) at $t=6$ and $t=5$, respectively.}
\end{minipage}
\end{center}
%========================Fig 2=====================================	

\noindent \textbf{Jordan-block solutions:}
Before giving the Jordan-block solutions, let's clarify lower triangular Toeplitz matrices, which are defined as
\begin{align*}
\mathcal{A}=\left(\begin{array}{cccccc}
\gamma_0 & 0    & 0   & \cdots & 0   & 0 \\
\gamma_1 & \gamma_0  & 0   & \cdots & 0   & 0 \\
\gamma_2 & \gamma_1  & \gamma_0 & \cdots & 0   & 0 \\
\vdots &\vdots &\cdots &\vdots &\vdots &\vdots \\
\gamma_{N-1} & \gamma_{N-2} & \gamma_{N-3}  & \cdots &  \gamma_1 & \gamma_0
\end{array}\right)_{N\times N}, \quad \gamma_j\in \mathbb{C}.
\end{align*}
Note that all the lower triangular Toeplitz matrices of the same order compose a commutative set in terms of matrix product.
Canonical form of such a matrix is a Jordan matrix.	For more properties of such matrices one can
refer to \cite{ZDJ-Wron,ZZSZ}.

We set $K_1(t)$ as a lower triangular Toeplitz matrix
\begin{align}
\label{Lam-Jor}
K_{1}(t)=(k_{s,j})_{(n+1)\times (n+1)}, \quad k_{s,j}=\Biggl\{
\begin{array}{ll}
\frac{1}{(s-j)!}\partial^{s-j}_{a_1}p_1,&~s\geq j,\\
0,&~s<j.
\end{array}
\end{align}
In the case of $(\sigma,\delta)=(1, \pm 1)$, we have
\begin{subequations}
\begin{align}
\phi_{j}=\Biggl\{
\begin{array}{ll}
\frac{\partial^{j-1}_{a_{1}}}{(j-1)!}\alpha_1(-p_1)^{-n}e^{p_1x+p_1^2y},& j=1,2,\dots,n+1,\\
\frac{\partial^{s-1}_{a_{1}}}{(s-1)!}\wt{\alpha}_1(-q_1)^{-n}e^{q_1x+q_1^2y},& j=n+1+s;~~s=1,2,\dots,n+1,\\
\end{array}
\end{align}
and in the case of $(\sigma,\delta)=(-1, \pm 1)$, we get
\begin{align}
\phi_{j}=\Biggl\{
\begin{array}{ll}
\frac{\partial^{j-1}_{a_{1}}}{(j-1)!}\alpha_1(-p_1)^{-n}e^{p_1x+p_1^2y},& j=1,2,\dots,n+1,\\
\frac{\partial^{s-1}_{b_{1}}}{(s-1)!}\wt{\alpha}_1(-r_1)^{-n}e^{r_1x+r_1^2y},& j=n+1+s;~~s=1,2,\dots,n+1.\\
\end{array}
\end{align}
\end{subequations}
Particularly, when $n=1$ the simplest Jordan-block solutions to the equation \eqref{nn-BSE-R} with different $(\sigma, \delta)$ read
\begin{subequations}
\label{lim-u}
\begin{align}
& \label{lim-u(1,-1)}
u_{\sigma=1,\delta=-1}=\frac{8a^2_1p_1^2q_1^2e^{E_2}(2x\sinh E_1+4y(p_1q_1)^{\frac{1}{2}}\sinh(E_1+\frac{1}{2}\ln{\frac{q_1}{p_1}})-\frac{1}{a_1p_1q_1}\cosh E_1)}
{\theta_1(\cosh(4a_1p_1q_1(x+2p_1y+q_1)+2\ln \theta_1))+E_3}, \\
& \label{lim-u(1,1)}
u_{\sigma=1,\delta=1}=\frac{8a^2_1p_1^2q_1^2e^{E_2}(2x\cosh E_1+4y(p_1q_1)^{\frac{1}{2}}\cosh(E_1+\frac{1}{2}\ln{\frac{q_1}{p_1}})-\frac{1}{a_1p_1q_1}\sinh E_1)}
{\theta_1(\cosh(4a_1p_1q_1(x+2p_1y+q_1)+2\ln \theta_1))-E_3}, \\
& \label{lim-u(-1,1)}
u_{\sigma=-1,\delta=1}=\frac{2(p_1-r_1)e^{-2x(a_1p_1^2+b_1r_1^2)}\left(e^{F_1}(1+F_3)+e^{F_2}(1+F_3^{p\leftrightarrow r})\right)}{1+\cosh(F_1-F_2)+2F_3F_3^{p\leftrightarrow r}},\\
& \label{lim-u(-1,-1)}
u_{\sigma=-1,\delta=-1}=\frac{-2i(p_1-r_1)e^{-2x(a_1p_1^2+b_1r_1^2)}\left(e^{F_1}(1+F_3)+e^{F_2}(1+F_3^{p\leftrightarrow r})\right)}{1+\cosh(F_1-F_2)+2F_3F_3^{p\leftrightarrow r}},
\end{align}
\end{subequations}
in which
\begin{subequations}
\begin{align}
&E_1=2a_1p_1q_1x+(p_1^2-q_1^2)y+\ln \theta_1, \quad E_2=E_1+2q_1(x+q_1y),\\
&E_3=8(a_1p_1q_1)^{2}(x^2+4p_1q_1(tx+y)y)+1,\\
&F_1=2(a_1p_1^2x+(tx+y)r_1^2),\quad F_2=2(b_1r_1^2x+(tx+y)p_1^2),\\
&F_3=p_1(p_1-r_1)(p_1^{-1}x+2y).
\end{align}
\end{subequations}
As an example, we illustrate the Jordan-block solution \eqref{lim-u(1,-1)} in Figure 3.

%========================Fig 3=====================================	
\begin{center}
\begin{picture}(120,100)
\put(-160,-23){\resizebox{!}{4cm}{\includegraphics{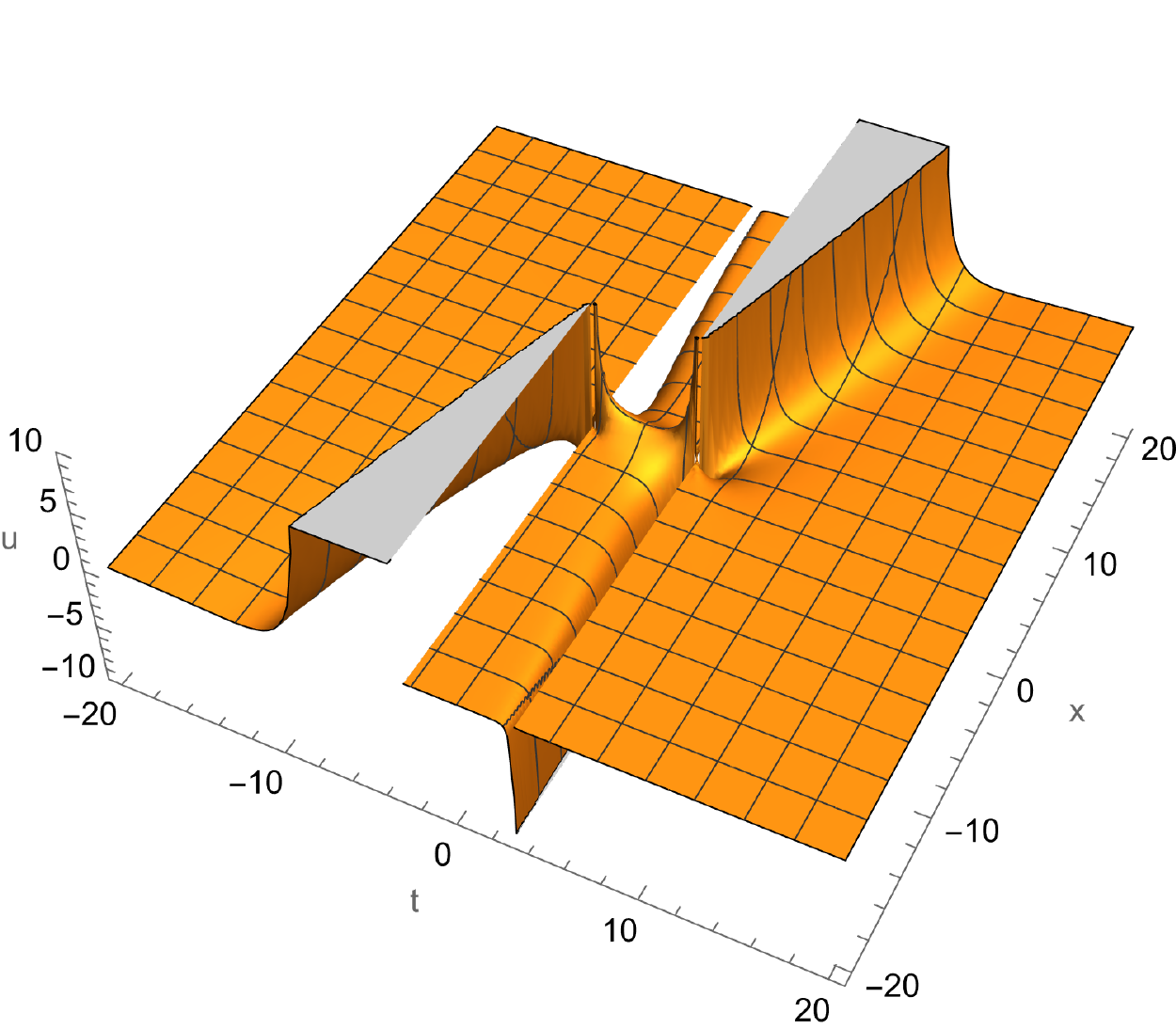}}}
\put(-10,-23){\resizebox{!}{4cm}{\includegraphics{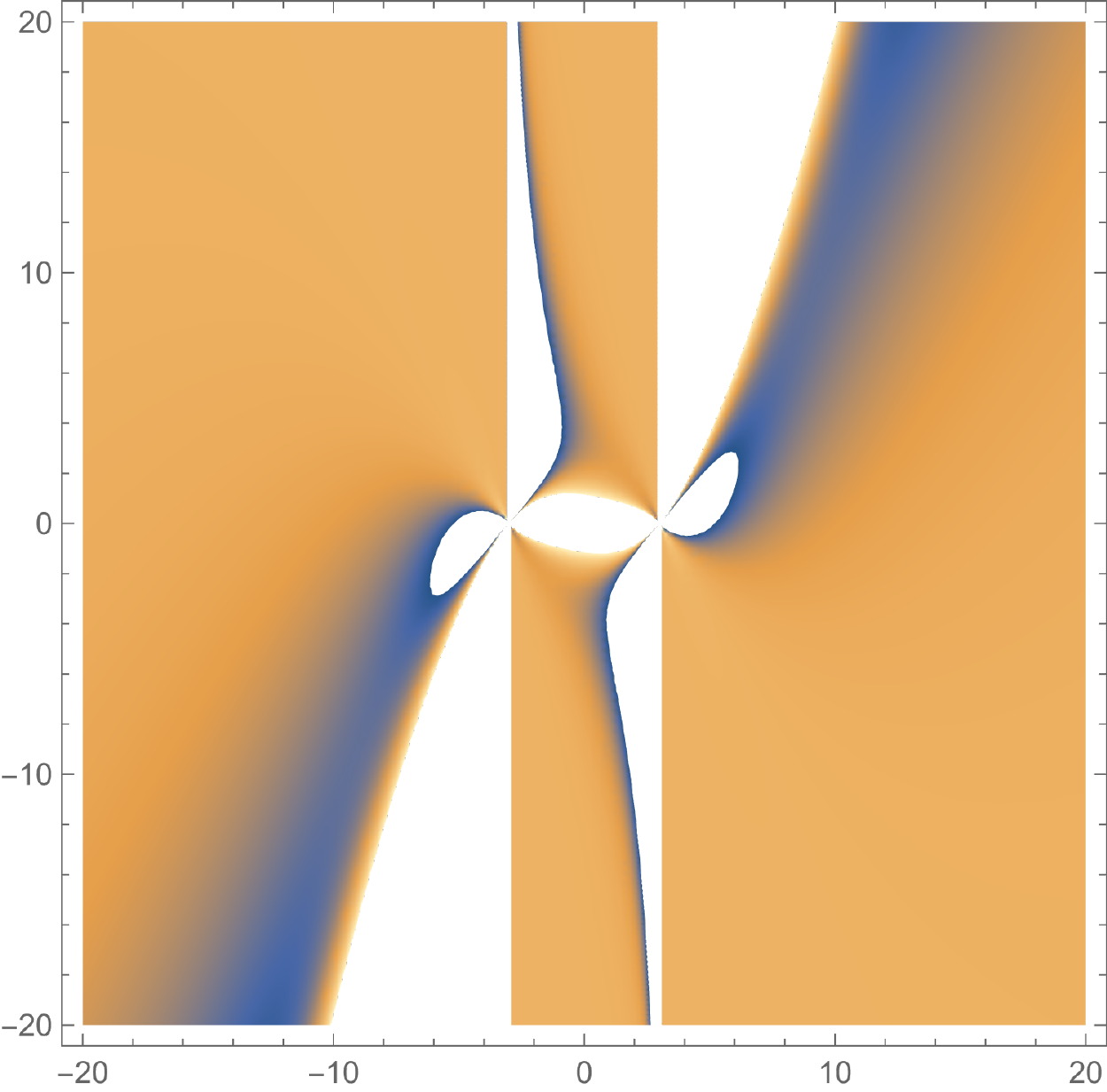}}}
\put(130,-23){\resizebox{!}{4cm}{\includegraphics{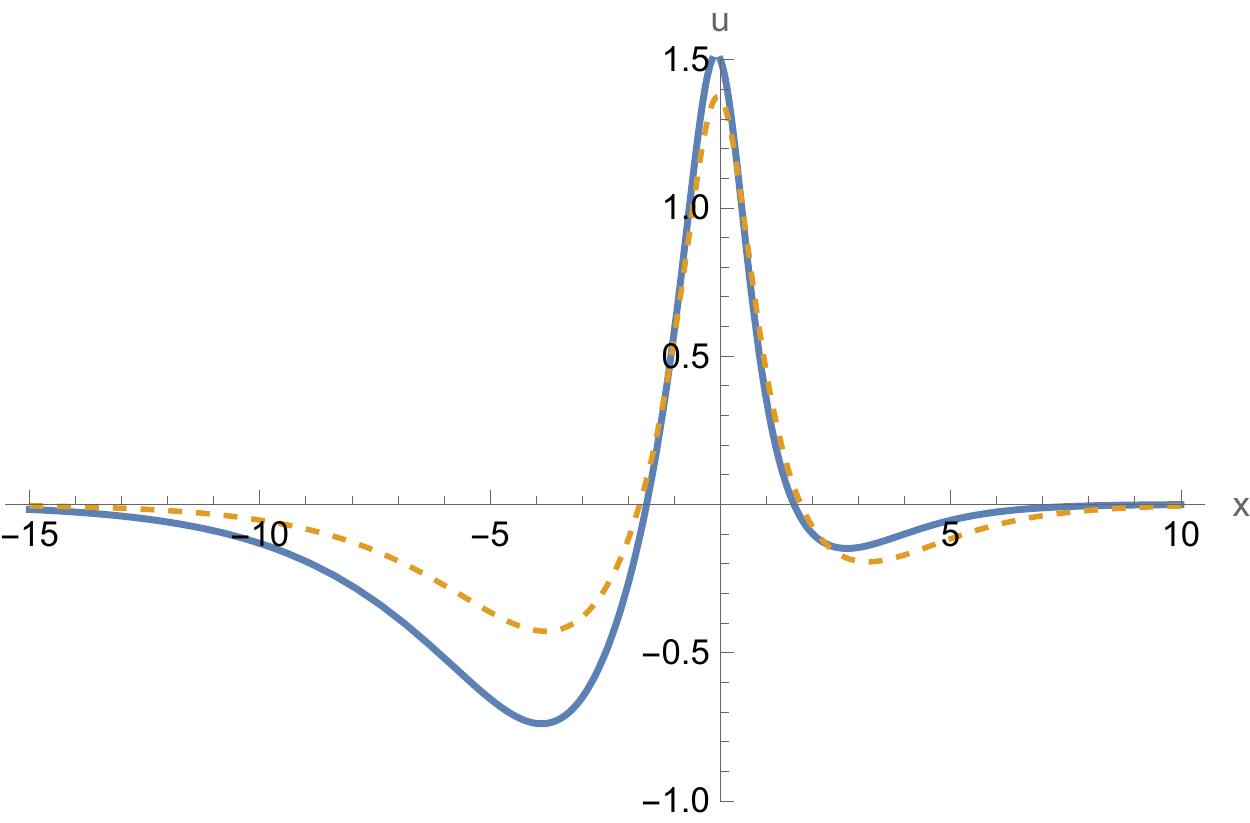}}}
\end{picture}
\end{center}
\vskip 20pt
\begin{center}
\begin{minipage}{16cm}{\footnotesize
\quad\qquad\qquad\qquad(a)\qquad\qquad\qquad \qquad\quad \quad \qquad\qquad  \quad (b) \qquad\qquad \qquad \qquad\qquad\qquad\qquad\qquad \quad (c)\\
{\bf Fig. 3} (a) shape and motion of $u$ given by \eqref{lim-u(1,-1)} for $a_1=3$, $\theta_1=1$ and $y=0$.
(b) a contour plot of (a) with range $x\in [-20, 20]$ and $t\in [-20,20]$.
(c) waves in solid and dotted line stand for  plot (a) at $t=1$ and $t=0.5$, respectively.}
\end{minipage}
\end{center}
	%========================Fig 3=====================================	
	
\section{Complex nonlocal reduction of the system \eqref{n-(2+1)-DBS}}\label{sec-4}

In this section, we mainly discuss complex nonlocal reduction of the system \eqref{n-(2+1)-DBS}. The strategy
is similar to the real case, i.e., first reducing the system \eqref{n-(2+1)-DBS}, and second
obtaining exact solutions. Because of the complication brought by complex number, we just consider soliton solutions
for the resulting nonlocal equation.

We impose complex reduction
\begin{align}
\label{com-re1}
v(x,y,t)=\delta u^{*}(\sigma x,-y,-\sigma t),\quad \sigma,~~\delta=\pm 1,
\end{align}
and the system \eqref{n-(2+1)-DBS} reduces to the CNNBS-NLS equation
\begin{align}
\label{nn-BSE-C}
u_t= & -y(u_{xy}-2\delta u\partial^{-1}_{x}(uu^*(\sigma x,-y,-\sigma t))_y)-\frac{x}{2}(u_{xx}-2\delta u^2u^*(\sigma x,-y,-\sigma t)) \nn \\
& -u_{x}+\delta u\partial^{-1}_{x}uu^*(\sigma x,-y,-\sigma t).
\end{align}
Similarly, equation \eqref{nn-BSE-C} is preserved under transformation $u\rightarrow -u$ and
equation \eqref{nn-BSE-C} with $(\sigma,\delta)=(\pm 1, 1)$ and
with $(\sigma,\delta)=(\pm 1,-1)$ can be transformed into each other by taking $u\rightarrow \pm iu$.
When $\sigma=1$ \eqref{nn-BSE-C} is reverse-$(y,t)$ type and when $\sigma=-1$ \eqref{nn-BSE-C} is reverse-$(x,y)$ type.

We now investigate solutions of the equation \eqref{nn-BSE-C}, which can be
described by the following theorem.
\begin{Thm}
\label{Thm-BSEC-so}
Double Wronskian solutions of the CNNBS-NLS equation \eqref{nn-BSE-C} are given by $u=\frac{g}{f}$, in which
\begin{subequations}
\label{C-fg}
\begin{align}
f=& |\wh{\phi}^{(n)}(x,y,t);(-\sigma)^{n}T\wh{\phi}^{(n)^*}(\sigma x,-y,-\sigma t)|, \\
g=& 2|\wh{\phi}^{(n+1)}(x,y,t);(-\sigma)^{n}T\wh{\phi}^{(n-1)^*}(\sigma x,-y,-\sigma t)|,
\end{align}
\end{subequations}
where $\phi$ is the $2(n+1)$-th order column vectors defined by \eqref{phsi-solu1} and $2(n+1)$-th order matrices
\begin{align}
\label{com-kT-def}
K(t)=\left(\begin{array}{cc}
K_1(t) & \bm 0 \\
\bm 0 & K_2(t) \\
\end{array}\right),\quad T=\left(\begin{array}{cc}
T_1 & T_2 \\
T_3 & T_4 \\
\end{array}\right)
\end{align}
satisfy determining equations
\begin{align}
\label{nBSE-com-kT}
K(t)T+\sigma TK^*(-\sigma t)=0,\quad TT^*=\sigma\delta I,
\end{align}
and we require $\beta=T\alpha^*$.
\end{Thm}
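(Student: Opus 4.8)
The plan is to mirror the proof of Theorem~\ref{Thm-BSER-so}, inserting a complex conjugation at every stage and using the second determining equation in \eqref{nBSE-com-kT} in the form $TT^{*}=\sigma\delta I$ (equivalently $T^{-1}=\sigma\delta\,T^{*}$, since $(\sigma\delta)^{-1}=\sigma\delta$). First I would derive the companion relation
\begin{align}
\label{com-con}
\psi(x,y,t)=(-\sigma)^{n}T\,\phi^{*}(\sigma x,-y,-\sigma t).
\end{align}
Starting from $\psi=(-K(t))^{-n}e^{-K(t)x-K^{2}(t)y}\beta$ with $m=n$, I rewrite the first determining equation in \eqref{nBSE-com-kT} as $K(t)=-\sigma TK^{*}(-\sigma t)T^{-1}$, so that $K^{2}(t)=TK^{*2}(-\sigma t)T^{-1}$; substituting these into both the prefactor and the exponential, using $\beta=T\alpha^{*}$ together with $\phi^{*}(\sigma x,-y,-\sigma t)=(-1)^{-n}(K^{*}(-\sigma t))^{-n}e^{\sigma K^{*}(-\sigma t)x-K^{*2}(-\sigma t)y}\alpha^{*}$ and the identity $\sigma^{-n}=\sigma^{n}$, relation \eqref{com-con} follows exactly as in the real case.

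Next I would substitute \eqref{com-con} into the double Wronskians \eqref{C-fg} and into the auxiliary determinant $h=2|\wh{\phi}^{(n-1)};\wh{\psi}^{(n+1)}|$, writing each of $f,g,h$ as a block determinant whose first block is built from $\phi(x,y,t)$ and whose second block is $(-\sigma)^{n}T$ times a conjugated, reflected copy, reusing the notation $\wh{\phi}^{(s)}(ax)_{[bx]}$ from the real proof. The decisive step is to evaluate these determinants at the reflected point $(\sigma x,-y,-\sigma t)$. Taking the complex conjugate of the whole determinant sends $\phi^{*}\to\phi$ and $\phi\to\phi^{*}$, thereby interchanging the two blocks; I then factor $T$ through the combined matrix via $\det[A\mid TB]=|T|\det[T^{-1}A\mid B]$, swap the two blocks of $n+1$ columns (contributing $(-1)^{(n+1)^{2}}$ for $f$ and $(-1)^{n(n+2)}$ for the $g\leftrightarrow h$ pairing), and convert the differentiation variable from $[\sigma x]$ to $[x]$ using $\partial_{\sigma x}=\sigma\partial_{x}$ (contributing the trivial power $\sigma^{n(n+1)}=1$ for $f$ but the single extra factor $\sigma$ for $g$). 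At the point where $T^{-1}$ appears I invoke $T^{-1}=\sigma\delta\,T^{*}$ and pull the scalar $\sigma\delta$ out of $n+1$ columns, producing $(\sigma\delta)^{n+1}$ for $f$ and $(\sigma\delta)^{n+2}$ for $g$. This yields, exactly as in \eqref{relation-ff}--\eqref{relation-gh},
\begin{align}
f(\sigma x,-y,-\sigma t)&=(-1)^{(n+1)^{2}}|T|(\sigma\delta)^{n+1}f^{*}(x,y,t),\\
g(\sigma x,-y,-\sigma t)&=(-1)^{n(n+2)}|T|\,\sigma\,(\sigma\delta)^{n+2}h^{*}(x,y,t).
\end{align}

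Finally, forming $u(\sigma x,-y,-\sigma t)=g(\sigma x,-y,-\sigma t)/f(\sigma x,-y,-\sigma t)$ and invoking transformation \eqref{uv-tran}, the (possibly complex) factor $|T|$ cancels in the ratio, while the sign and exponent data collapse via $n(n+2)-(n+1)^{2}=-1$ and $\sigma(\sigma\delta)=\delta$ to the single constant $-\delta$. Hence $u(\sigma x,-y,-\sigma t)=-\delta\,h^{*}(x,y,t)/f^{*}(x,y,t)=\delta\,v^{*}(x,y,t)$, which is equivalent to the complex reduction \eqref{com-re1} and completes the verification. I expect the principal obstacle to be precisely this sign-and-exponent bookkeeping: because conjugation now intervenes, one must confirm that the two different powers $(\sigma\delta)^{n+1}$ and $(\sigma\delta)^{n+2}$, the signs $(-1)^{(n+1)^{2}}$, $(-1)^{n(n+2)}$ and the stray factor $\sigma$ combine to $-\delta$ for every parity of $n$, and that the conjugation indeed lands the reduced quantities on $f^{*}$ and $h^{*}$ rather than on $f$ and $h$ — this is what forces the hypothesis $TT^{*}=\sigma\delta I$ in place of the real-case hypothesis $T^{2}=\sigma\delta I$.
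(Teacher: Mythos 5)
Your proposal is correct and follows exactly the route the paper intends: the paper states Theorem \ref{Thm-BSEC-so} without a written proof, leaving it as the complex-conjugate analogue of the proof of Theorem \ref{Thm-BSER-so}, and your argument is precisely that analogue, with the key relations $\psi(x,y,t)=(-\sigma)^{n}T\phi^{*}(\sigma x,-y,-\sigma t)$, $f(\sigma x,-y,-\sigma t)=(-1)^{(n+1)^{2}}|T|(\sigma\delta)^{n+1}f^{*}(x,y,t)$ and $g(\sigma x,-y,-\sigma t)=(-1)^{n(n+2)}|T|\sigma(\sigma\delta)^{n+2}h^{*}(x,y,t)$ all checking out, including the sign bookkeeping $n(n+2)-(n+1)^{2}=-1$ and $\sigma(\sigma\delta)=\delta$ that yields $u(\sigma x,-y,-\sigma t)=\delta v^{*}(x,y,t)$.
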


In accordance with the equations \eqref{com-kT-def} and \eqref{nBSE-com-kT}, the solutions
of $K_i(t)$ and $T_j$ with $i=1,2$ and $j=1,2,3,4,$ are given in Table 2.
\begin{center}
\footnotesize \setlength{\tabcolsep}{8pt}
\renewcommand{\arraystretch}{1.5}
\begin{tabular}[htbp]{|l|l|l|l|}
\hline
$(\sigma,\delta)$& $K(t)$ & $T$\\
\hline
$(1,-1)$ &\eqref{com-kT-def} \mbox{with} $K_2(t)=-K^*_1(-t)$ & $T_1=T_4=\bm 0, \quad T_3=-T_2=I$ \\
\hline
$(1,1)$ & \eqref{com-kT-def} \mbox{with} $K_2(t)=-K^*_1(-t)$ & $T_1=T_4=\bm 0, \quad T_3=T_2=I$ \\
\hline
$(-1,-1)$ &\eqref{com-kT-def} \mbox{with} $K_2(t)=K^*_1(t)$& $T_1=T_4=\bm  0, \quad T_3=-T_2=I$ \\
\hline
$(-1,1)$ & \eqref{com-kT-def} \mbox{with} $K_2(t)=K^*_1(t)$ & $T_1=T_4=\bm 0, \quad T_3=T_2=I$ \\
\hline
\end{tabular}
\end{center}
\begin{center}
\begin{minipage}{7cm}{\footnotesize
{Table 2. \emph{$K(t)$ and $T$ for equation \eqref{nBSE-com-kT}.}}}
\end{minipage}
\end{center}
	
To derive the soliton solutions, we set
\begin{align}
K_1=\mbox{Diag}(k_1,~k_2,~\ldots,~k_{n+1}),
\end{align}
in which $k_j=(t-c_j)^{-1}$, where $c_{j},~j=1,2,\ldots,n+1$ are complex constants.
Besides, we employ notation $l_j=(t+c_j^*)^{-1}$.
With different $(\sigma,\delta)$, the one-soliton solution of equation \eqref{nn-BSE-C} can be described as
\begin{subequations}
\label{cn-noni-(2+1)-DBSE-1soli}
\begin{align}
&\label{cn-noni-(2+1)-DBSE-u(1,-1)}
u_{\sigma=1,\delta=-1}=-\frac{2\alpha_1{\wt{\alpha}}_{1}(k_1-l_1)}
                       {|\alpha_1|^2e^{(k^*_1-l_1)x-({k_1^*}^2+l_1^2)y}+|{\wt{\alpha}}_{1}|^2e^{(l^*_1-k_1)x-({l_1^*}^2+k_1^2)y}},\\
&\label{cn-noni-(2+1)-DBSE-u(1,1)}
u_{\sigma=1,\delta=1}=-\frac{2\alpha_1{\wt{\alpha}}_{1}(k_1-l_1)}
                      {|\alpha_1|^2e^{(k^*_1-l_1)x-({k_1^*}^2+l_1^2)y}-|{\wt{\alpha}}_{1}|^2e^{(l^*_1-k_1)x-({l_1^*}^2+k_1^2)y}},\\
&\label{cn-noni-(2+1)-DBSE-u(-1,-1)}
u_{\sigma=-1,\delta=-1}=-\frac{2\alpha_1{\wt{\alpha}}_{1}(k_1-k_1^*)}
			           {|\alpha_1|^2e^{-2k^*_1x-2{k_1^*}^2y}-|{\wt{\alpha}}_{1}|^2e^{-2k_1x-2k_1^2y}},\\
&\label{cn-noni-(2+1)-DBSE-u(-1,1)}	
u_{\sigma=-1,\delta=1}=-\frac{2\alpha_1{\wt{\alpha}}_{1}(k_1-k_1^*)}
                       {|\alpha_1|^2e^{-2k^*_1x-2{k_1^*}^2y}+|{\wt{\alpha}}_{1}|^2e^{-2k_1x-2k_1^2y}},
\end{align}
\end{subequations}
with module $|\cdot|$.
	
Now let us describe the dynamic behaviors and characteristics of $|u|^2$ given by \eqref{cn-noni-(2+1)-DBSE-1soli}.
Above all, we concentrate on solutions \eqref{cn-noni-(2+1)-DBSE-u(1,-1)} and \eqref{cn-noni-(2+1)-DBSE-u(1,1)}.
Denoting $c_1=\mu+i\nu$ and substituting it back into \eqref{cn-noni-(2+1)-DBSE-u(1,-1)} and \eqref{cn-noni-(2+1)-DBSE-u(1,1)} give rise to
\begin{subequations}
\label{cn-noni-(2+1)-DBSE-mude}	
\begin{align}
& \label{cn-noni-(2+1)-DBSE-|u(1,-1)|}
|u|^2_{\sigma=1,\delta=-1}=\frac{16|\alpha_1{\wt{\alpha}}_{1}|^2\mu ^2e^{2X_1+Y_1}}
      {(4t^2\nu^2+\xi^2)(|\alpha_1|^4e^{4X_1}+|{\wt{\alpha}}_{1}|^4+2|\alpha_1{\wt{\alpha}}_{1}|^2e^{2X_1}\cos(2Y_2))},\\
&\label{cn-noni-(2+1)-DBSE-|u(1,1)|}
      |u|^2_{\sigma=1,\delta=1}=\frac{16|\alpha_1{\wt{\alpha}}_{1}|^2\mu ^2e^{2X_1+Y_1}}
      {(4t^2\nu^2+\xi^2)(|\alpha_1|^4e^{4X_1}+|{\wt{\alpha}}_{1}|^4-2|\alpha_1{\wt{\alpha}}_{1}|^2e^{2X_1}\cos(2Y_2))},
\end{align}
\end{subequations}
in which
\begin{subequations}
\begin{align*}
& \xi=t^2-\mu^2-\nu^2,\quad \xi_1=t^2-\mu^2+\nu^2,\quad \xi_2=t^2+\mu^2-\nu^2,\\ \nn
& X_1=2\mu x\frac{\xi_1-2\nu^2}{\xi_1^2+4\mu^2\nu^2},\quad Y_1=2y\frac{\xi_1^2\xi_2-4\mu^2\nu^2(2\xi_1+\xi_2)}{(\xi_1^2+4\mu^2\nu^2)^2},
\quad Y_2=4\mu\nu y\frac{(\xi_1^2+2\xi_1\xi_2-4\mu^2\nu^2)}{(\xi_1^2+4\mu^2\nu^2)^2}.
\end{align*}
\end{subequations}
For solution \eqref{cn-noni-(2+1)-DBSE-|u(1,-1)|},
when $y=0$, it is a nonsingular wave while when $y\neq 0$ it has singularities along
\begin{align}
\label{sigu}
y(t)=\frac{\kappa \pi(\xi_1^2+4\mu^2\nu^2)^2}{4\mu\nu(\xi_1^2+2\xi_1\xi_2-4\mu^2\nu^2)}, \quad \kappa\in\mathbb{Z}.
\end{align}
For solution \eqref{cn-noni-(2+1)-DBSE-|u(1,1)|}, it always has singularities along \eqref{sigu} regardless of $y=0$ or not.
We depict these two solutions in Fig. 4.

%========================Fig 4=====================================	
\begin{center}
\begin{picture}(120,100)
\put(-120,-23){\resizebox{!}{4cm}{\includegraphics{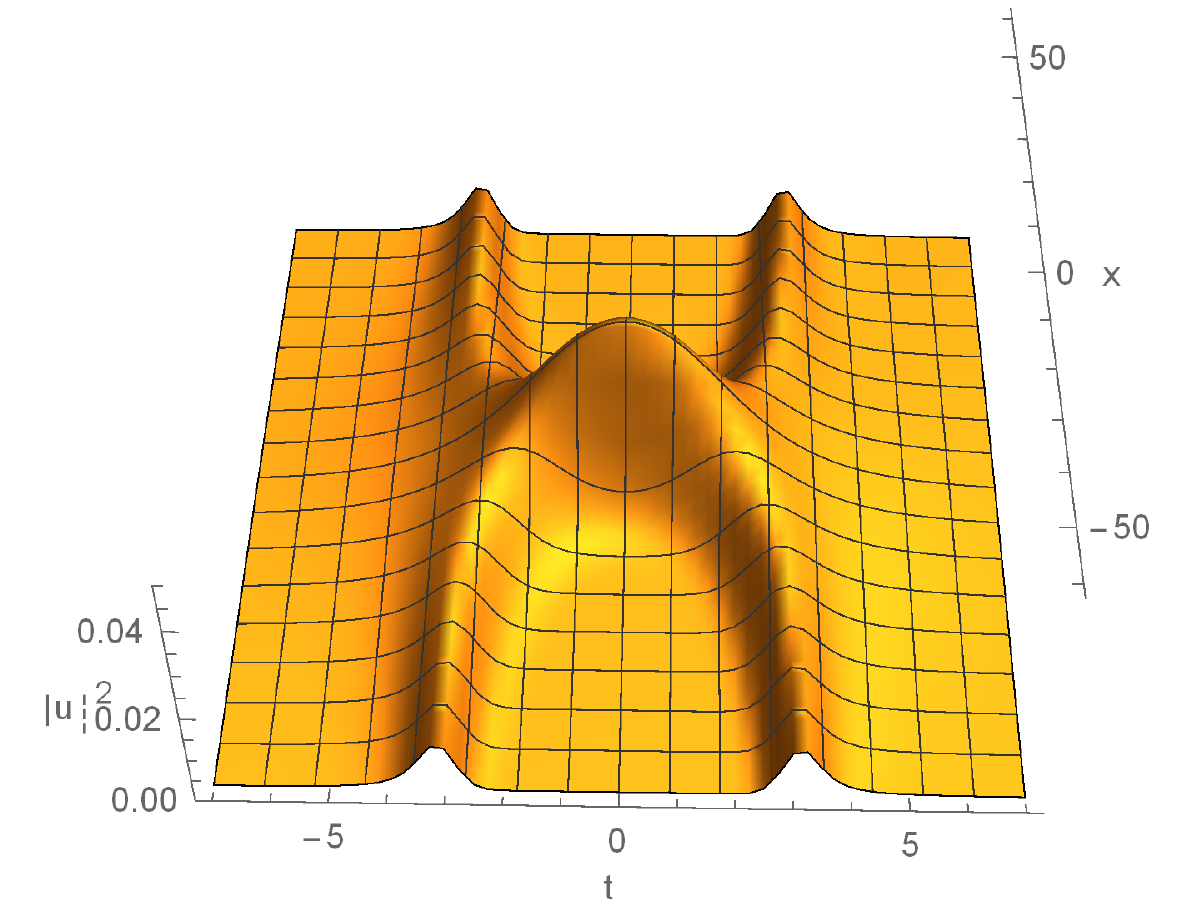}}}
\put(100,-23){\resizebox{!}{4cm}{\includegraphics{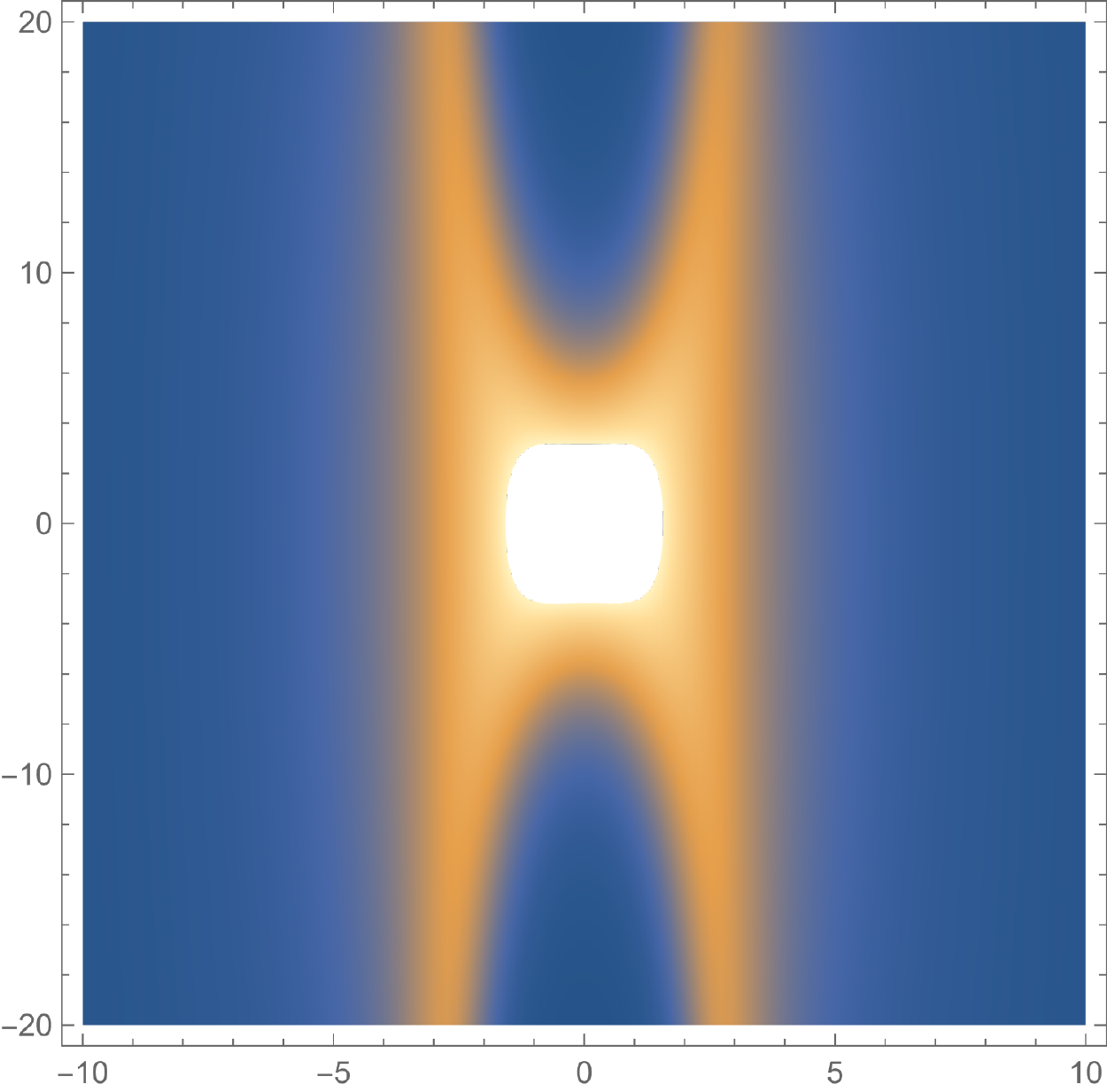}}}
\end{picture}
\end{center}
\vskip 20pt
\begin{center}
\begin{minipage}{15cm}{\footnotesize
\quad\qquad\qquad\qquad\qquad\qquad\quad(a)\qquad\qquad\qquad\qquad\qquad\qquad\qquad\qquad\quad \qquad \quad (b)}
\end{minipage}
\end{center}
\vskip 10pt
\begin{center}
\begin{picture}(120,80)
\put(-120,-23){\resizebox{!}{4cm}{\includegraphics{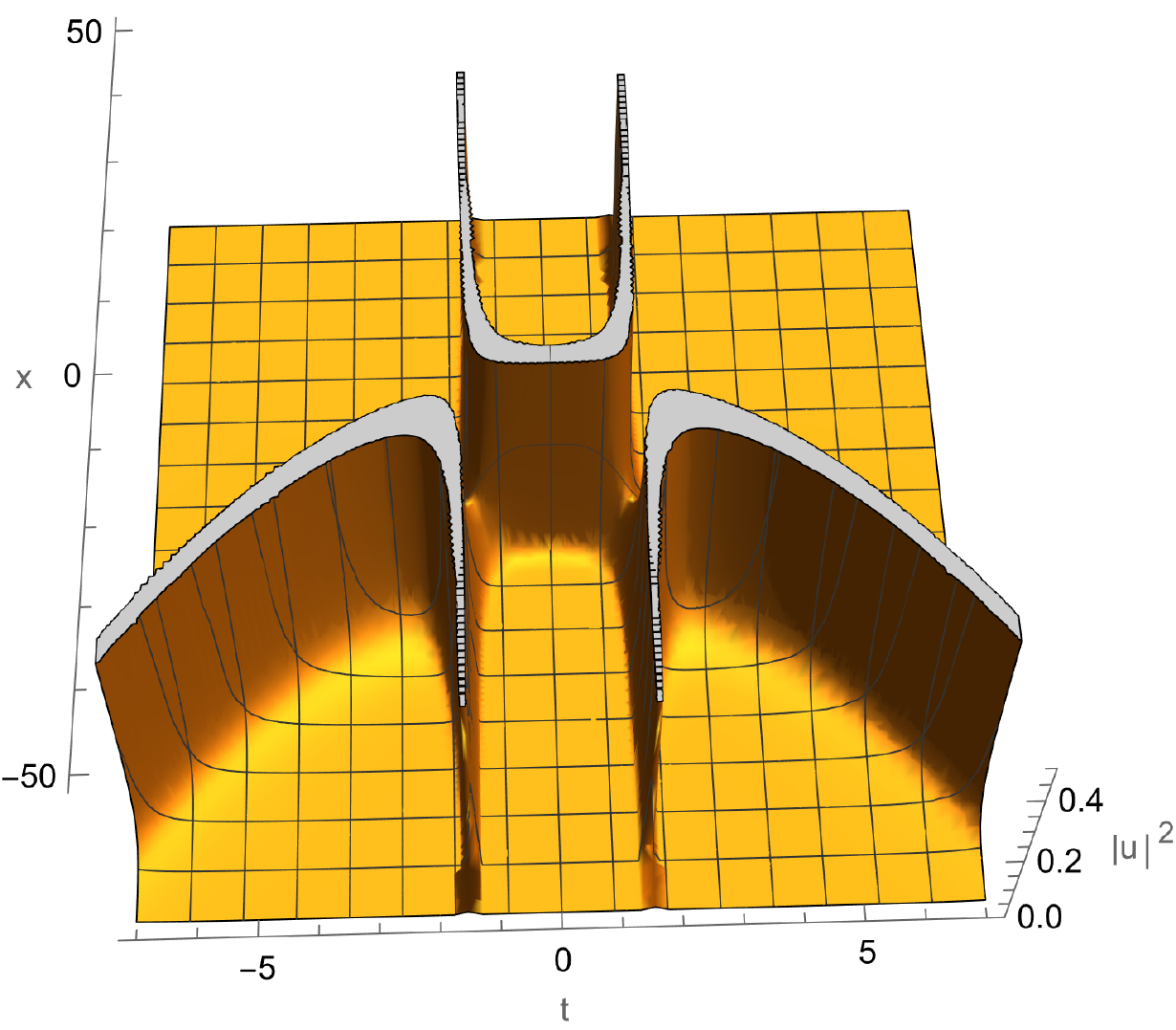}}}
\put(100,-23){\resizebox{!}{4cm}{\includegraphics{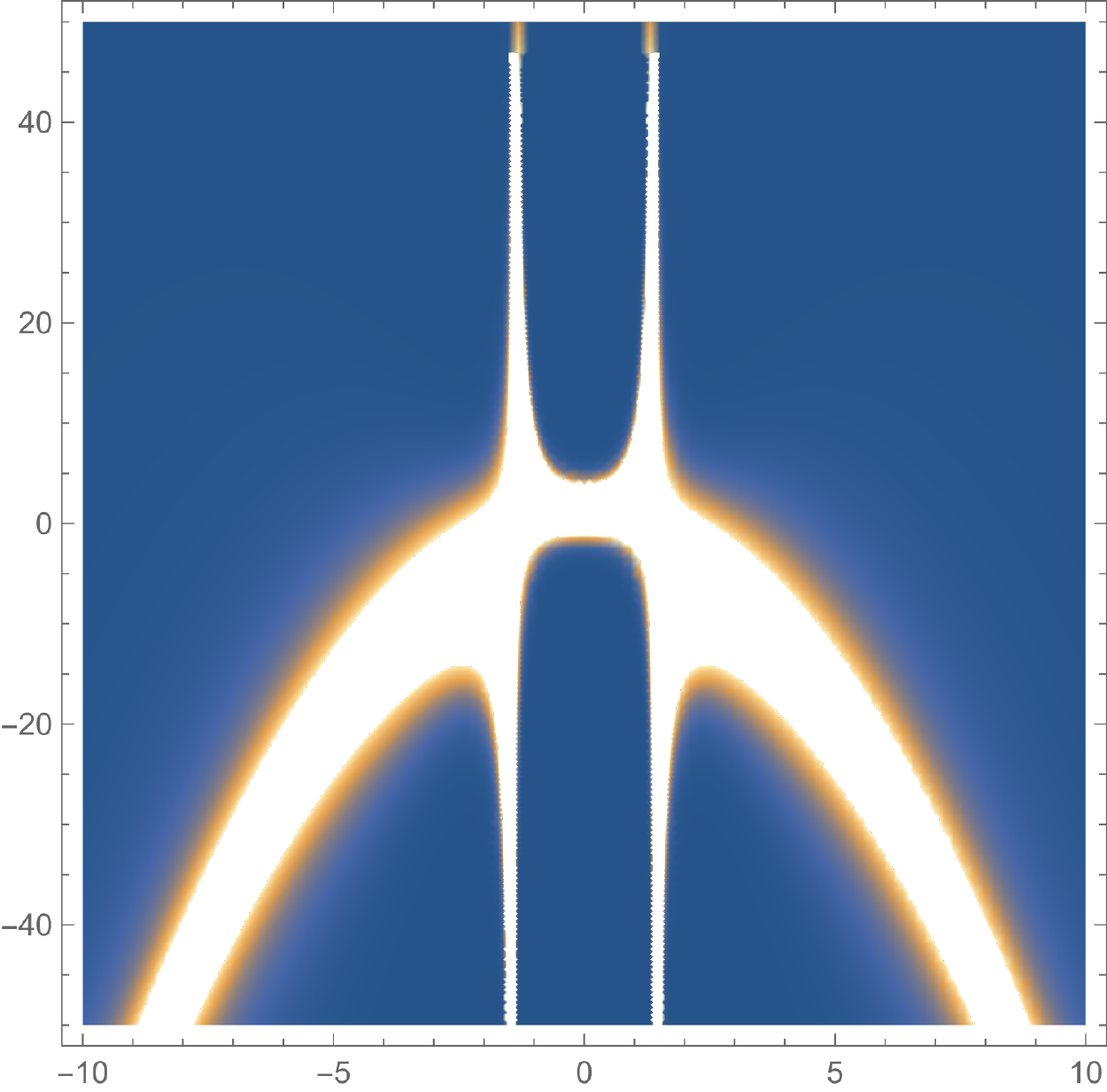}}}
\end{picture}
\end{center}
\vskip 20pt
\begin{center}
\begin{minipage}{15cm}{\footnotesize
\quad\qquad\qquad\qquad\qquad\quad(c)\qquad\qquad\qquad\qquad\qquad\qquad\qquad\qquad\qquad\quad \qquad \quad (d)\\
{\bf Fig. 4} (a) shape and motion of \eqref{cn-noni-(2+1)-DBSE-|u(1,-1)|} in which $c_1=1+3i,~\alpha_1=\wt{\alpha}_1=1+i$ and $y=0$.
(b) density plot of (a) with range $x\in[-20;20]$ and $t\in[-10;10]$.
(c) shape and motion of \eqref{cn-noni-(2+1)-DBSE-|u(1,1)|} in which $c_1=1+i,~\alpha_1=4,~\wt{\alpha}_{1}=1$ and $y=0$.
(d) density plot of (c) with range $x\in[-50;50]$ and $t\in[-10;10]$.}
\end{minipage}
\end{center}
%========================Fig 4=====================================	

To proceed, we consider solutions \eqref{cn-noni-(2+1)-DBSE-u(-1,-1)} and \eqref{cn-noni-(2+1)-DBSE-u(-1,1)}. Under the expansion $c_1=\mu+i\nu$, we observe that	
\begin{subequations}
\begin{align}
&\label{cn-noni-(2+1)-DBSE-|u(-1,-1)|}
|u|^2_{\sigma=-1,\delta=-1}=\frac{16|\alpha_1{\wt{\alpha}}_{1}|^2\nu ^2e^{2Z_1}}
{((t-\mu)^2+\nu^2)^2(|\alpha_1|^4+|{\wt{\alpha}}_{1}|^4-2|\alpha_1\wt{\alpha}_{1}|^2\cos(2Z_2))},\\
&\label{cn-noni-(2+1)-DBSE-|u(-1,1)|}	
|u|^2_{\sigma=-1,\delta=1}=\frac{16|\alpha_1{\wt{\alpha}}_{1}|^2\nu ^2e^{2Z_1}}
{((t-\mu)^2+\nu^2)^2(|\alpha_1|^4+|{\wt{\alpha}}_{1}|^4+2|\alpha_1\wt{\alpha}_{1}|^2\cos(2Z_2))},
\end{align}
\end{subequations}
where
\begin{align*}
Z_1=\frac{(t-\mu)((t-\mu)^2+\nu^2)x+2((t-\mu)^2-\nu^2)y}{((t-\mu)^2+\nu^2)^2}, ~~
Z_2=\frac{((t-\mu)^2+\nu^2)x+2\nu(2(t-\mu)y)}{((t-\mu)^2+\nu^2)^2}.
\end{align*}
\eqref{cn-noni-(2+1)-DBSE-|u(-1,1)|} has analogous property
with \eqref{cn-noni-(2+1)-DBSE-|u(-1,-1)|}. Solution \eqref{cn-noni-(2+1)-DBSE-|u(-1,-1)|} is nonsingular when $|\alpha_1| \neq |{\wt{\alpha}}_{1}|$.
While when $|\alpha_1|=|{\wt{\alpha}}_{1}|$, solution \eqref{cn-noni-(2+1)-DBSE-|u(-1,-1)|}
has singularities along
\begin{align}
x(t)=\frac{\kappa \pi((t-\mu)^2+\nu^2)}{2\nu}, \quad \kappa\in\mathbb{Z}.
\end{align}
Because of the involvement of cosine function in denominator, there is quasi-periodic phenomenon.
We depict solution \eqref{cn-noni-(2+1)-DBSE-|u(-1,-1)|} in Fig. 5.

%========================Fig 5=====================================	
\begin{center}
\begin{picture}(120,100)
\put(-120,-23){\resizebox{!}{4cm}{\includegraphics{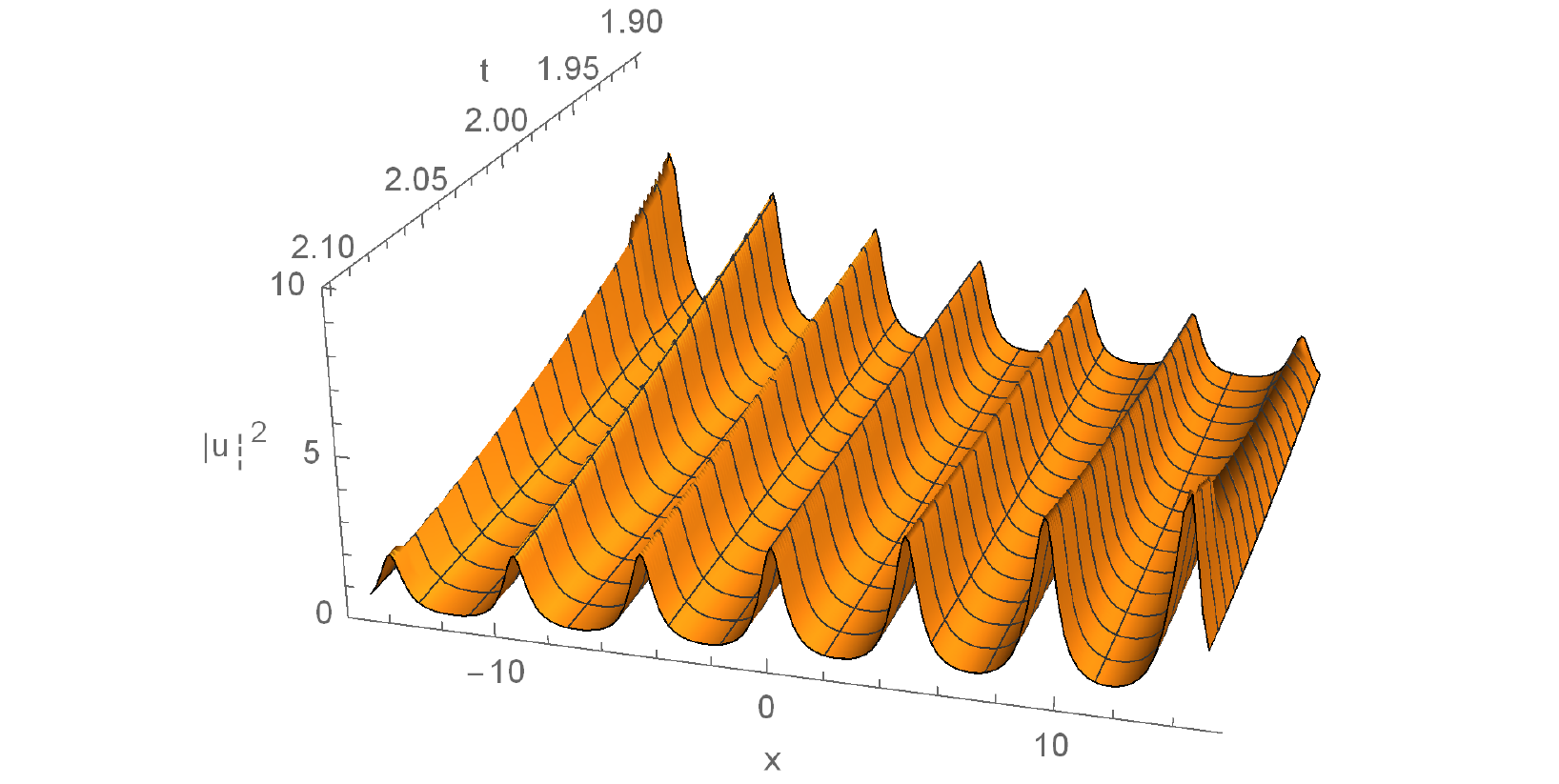}}}
\put(100,-23){\resizebox{!}{4cm}{\includegraphics{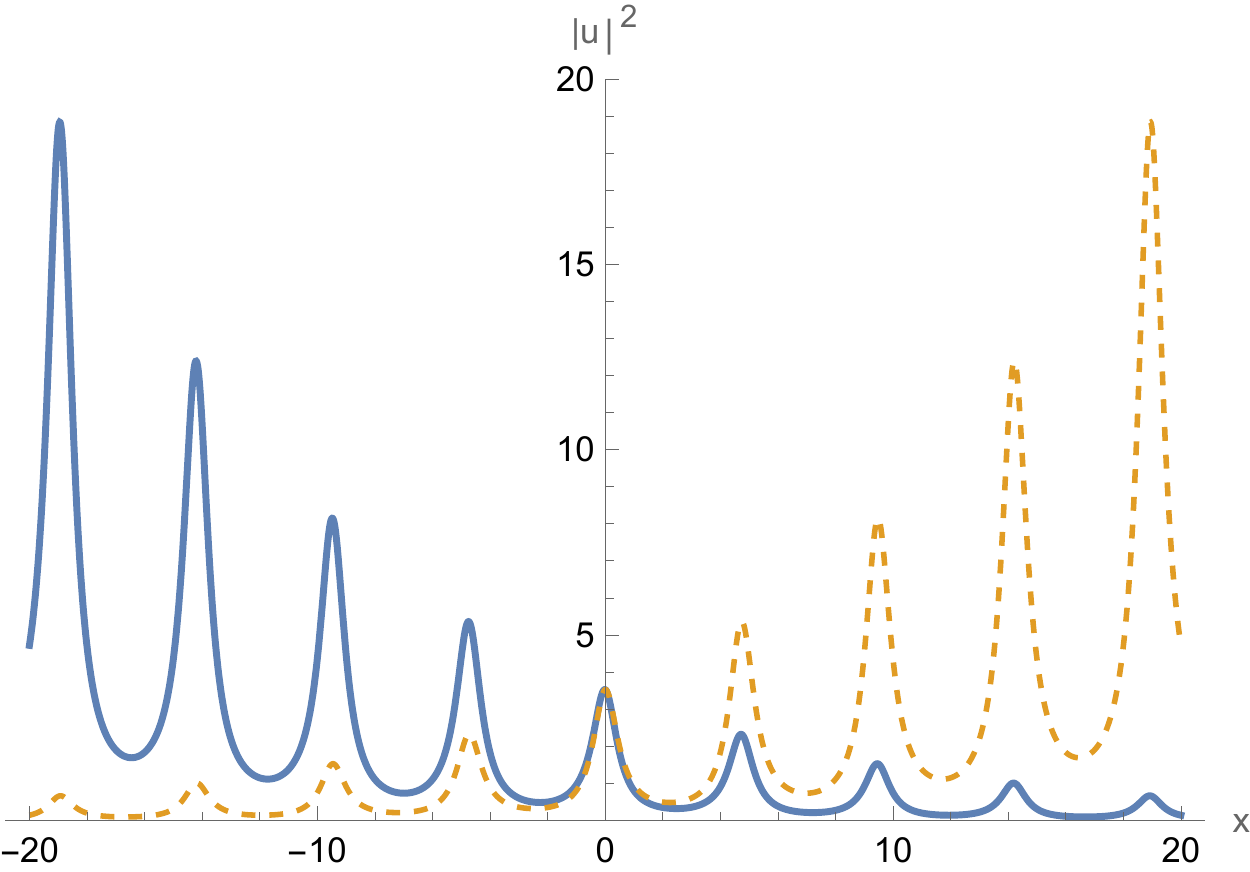}}}
\end{picture}
\end{center}
\vskip 20pt
\begin{center}
\begin{minipage}{15cm}{\footnotesize
\quad\qquad\qquad\qquad\qquad\qquad\quad\qquad(a)\qquad\qquad\qquad\qquad\qquad\qquad\qquad\qquad\quad \qquad \quad (b)}
\end{minipage}
\end{center}
\vskip 10pt
\begin{center}
\begin{picture}(120,80)
\put(-120,-23){\resizebox{!}{4cm}{\includegraphics{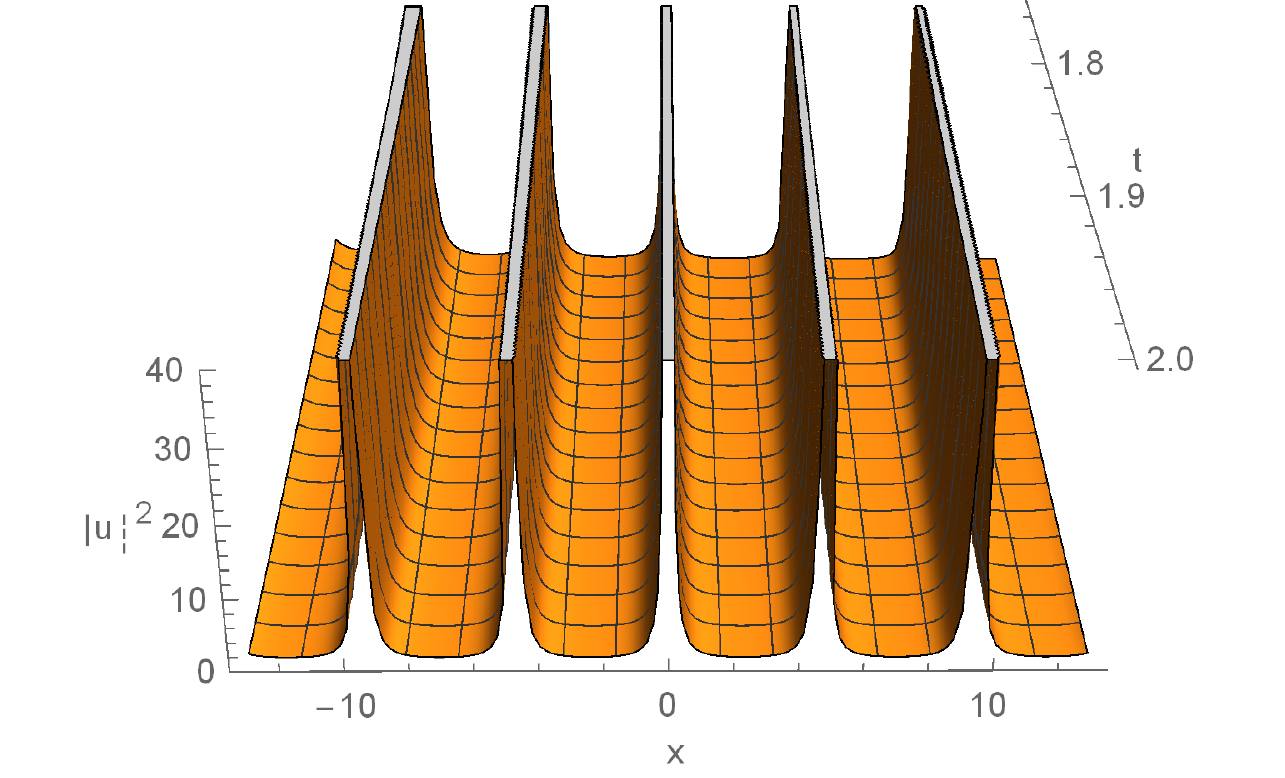}}}
\put(100,-23){\resizebox{!}{4cm}{\includegraphics{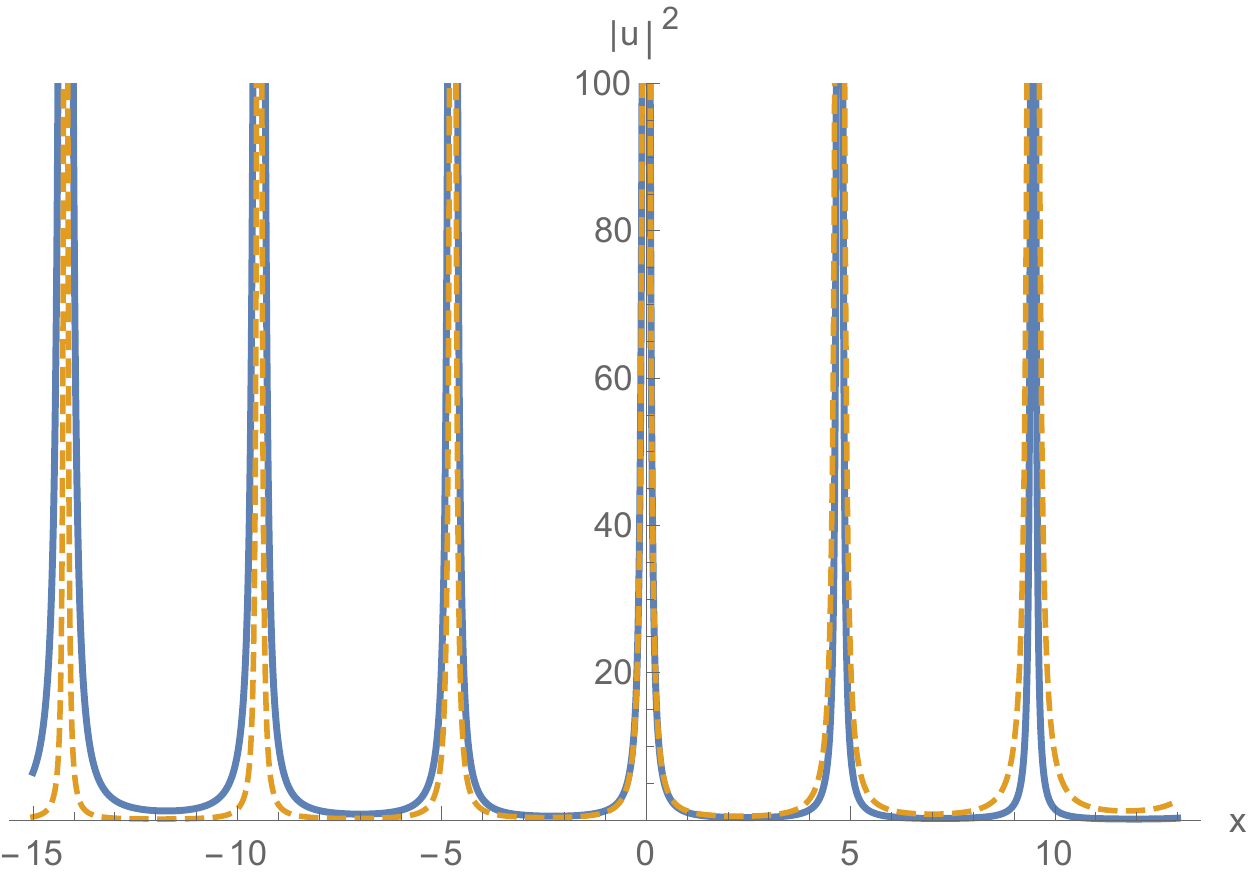}}}
\end{picture}
\end{center}
\vskip 20pt
\begin{center}
\begin{minipage}{15cm}{\footnotesize
\quad\qquad\qquad\qquad\qquad\qquad\quad\qquad(c)\qquad\qquad\qquad\qquad\qquad\qquad\qquad\qquad\quad \qquad \quad (d)\\
{\bf Fig. 5} (a) shape and motion of \eqref{cn-noni-(2+1)-DBSE-|u(-1,-1)|} in which $c_1=2+3i,~\alpha_1=1,~\wt{\alpha}_{1}=1+i,~y=0$.
(b) waves in solid and dotted line stand for plot (a) at $t=1.8$ and $t=2.2$, respectively.
(c) shape and motion of \eqref{cn-noni-(2+1)-DBSE-|u(-1,-1)|} in which $c_1=2+3i,~\alpha_1=\wt{\alpha}_{1}=1+i,~y=0$.
(d) waves in solid and dotted line stand for plot (c) at $t=1.8$ and $t=2.2$, respectively.}
\end{minipage}
\end{center}
%========================Fig 5=====================================	

\section{Conclusions}\label{sec-5}

In this work, we use the double Wronskian reduction technique
to consider real and complex nonlocal reductions of the NBS-AKNS \eqref{n-(2+1)-DBS}. Soliton solutions and
Jordan-block solutions for the resulting nonlocal equations are obtained by solving the determining
equations. Dynamics of one-soliton, two-solitons and the simplest Jordan-block solutions are analyzed and illustrated.
We find that nonlocal type for the real case and complex case are the same. Both of these two cases
have reverse-$(y,t)$ type and reverse-$(x,y)$ type nonlocal reductions. The reverse-$(y,t)$ type nonlocal reduction
is also admitted by the isospectral (2+1)-dimensional breaking soliton AKNS equation.
While the reverse-$(x,y)$ type nonlocal reduction is just admitted by the NBS-AKNS \eqref{n-(2+1)-DBS}.
In recent paper \cite{AbMu-PLA}, several novel integrable nonlocal systems,
including the shifted PT symmetric and the shifted time delay nonlocal NLS equations, were proposed and
have attracted more and more attention \cite{Gurses,Zhang-shift}. How to
extend the results in the present paper to the shifted space-time nonlocal isospectral and
nonisospectral (2+1)-dimensional breaking soliton equations
is an interesting questions worth consideration. We hope that the results given in the present paper can be useful to study the nonlocal integrable system,
specially to the nonlocal nonisospectral (2+1)-dimensional integrable systems.
	
%\vspace{3mm}
%\noindent {\bf Author Contributions}: All the authors contributed equally to the work. All authors read and approved the final manuscript.

\vspace{3mm}
\noindent {\bf Acknowledgments}: This project is supported by the National Natural Science Foundation of China (Nos. 12071432, 11401529)
and the Natural Science Foundation of Zhejiang Province (No. LY18A010033).
	
%\vspace{3mm}
%\noindent {\bf Acknowledgments}: We are very grateful to the reviewers for the invaluable and expert comments.

%\vspace{3mm}
%\noindent {\bf Conflicts of Interest}: The authors declare no conflict of interest.


\begin{thebibliography}{99}

\bibitem{AbMu-2013} M.J. Ablowitz, Z.H. Musslimani,
        Integrable nonlocal nonlinear Schr\"{o}dinger equation,
        Phys. Rev. Lett. 110 (2013) 064105.	
\bibitem{Gad} T.A. Gadzhimuradov, A.M. Agalarov,
        Towards a gauge-equivalent magnetic structure of the nonlocal nonlinear Schr\"{o}dinger equation,
        Phys. Rev. A 93 (2016) 062124.
\bibitem{Mak} K.G. Makris, R. El-Ganainy, D.N. Christodoulides, Z.H. Musslimani,
        Beam dynamics in PT symmetric optical lattices,
        Phys. Rev. Lett. 100 (2008) 103904.
\bibitem{Muss} Z.H. Musslimani, K.G. Makris, R. El-Ganainy, D.N. Christodoulides,
        Optical solitons in PT periodic potentials, Phys. Rev. Lett. 100 (2008) 030402.
\bibitem{AbMu-SAPM} M.J. Ablowitz, Z.H. Musslimani,
        Integrable nonlocal nonlinear equations,
        Stud. Appl. Math. 139 (2016) 7--59.
\bibitem{Lou} S.Y. Lou, F. Huang,
        Alice-Bob physics: coherent solutions of nonlocal KdV systems,
        Sci. Rep. 7 (2017) 869.
\bibitem{Pekcan} M. G\"{u}rses, A. Pekcan,
        Nonlocal KdV equations,
        Phys. Lett. A 384(35) (2020) 126894.
\bibitem{AM-Nonl-2016} M.J. Ablowitz, Z.H. Musslimani,
        Inverse scattering transform for the integrable nonlocal nonlinear Schr\"odinger equation,
        Nonlinearity 29 (2016) 915--946.
\bibitem{Yan-AML-2015} Z. Yan,
        Integrable PT-symmetric local and nonlocal vector nonlinear Schr\"odinger equations: A unified two-parameter model,
        Appl. Math. Lett. 47 (2015) 61--68.
\bibitem{Zhou-ZX-1} Z.X. Zhou,
        Darboux transformations and global explicit solutions for nonlocal Davey-Stewartson I equation,
        Stud. Appl. Math. 141 (2018) 186--204.
\bibitem{XU-AML-2016} Z.X. Xu, K.W. Chow,
        Breathers and rogue waves for a third order nonlocal partial differential equation by a bilinear transformation,
        Appl. Math. Lett. 56 (2016) 72--77.
\bibitem{ChenZ-AML-2017} K. Chen, D.J. Zhang,
        Solutions of the nonlocal nonlinear Schr\"odinger hierarchy via reduction,
        Appl. Math. Lett. 75 (2018) 82--88.
\bibitem{CDLZ} K. Chen, X. Deng, S.Y. Lou, D.J. Zhang,
        Solutions of local and nonlocal equations reduced from the AKNS hierarchy,
        Stud. Appl. Math. 141 (2018) 113--141.
\bibitem{FZS-IJMPB} W. Feng, S.L. Zhao, Y.Y. Sun,
        Double Casoratian solutions to the nonlocal semi-discrete modified Korteweg-de Vries equation,
        Int. J. Mod. Phys. B 34(5) (2020) 2050021.
\bibitem{FZ-ROMP} W. Feng, S.L. Zhao,
        Cauchy matrix type solutions for the nonlocal nonlinear Schr\"odinger equation,
        Rep. Math. Phys. 84(1) (2019) 75--83.
\bibitem{LWZ} S.M. Liu, H. Wu, D.J. Zhang,
        New dynamics of the classical and nonlocal Gross-Pitaevskii equation with a parabolic potential,
        Rep. Math. Phys. 86(3) (2020) 271--292.
\bibitem{FZ-IJMPB} W. Feng, S.L. Zhao,
        Soliton solutions to the nonlocal non-isospectral nonlinear Schr\"{o}dinger equation,
        Int. J. Mod. Phys. B 34(25) (2020) 2050219(14pp).
\bibitem{Xu-Zhao} H.J. Xu, S.L. Zhao,
        Local and nonlocal reductions of two nonisospectral Ablowitz-Kaup-Newell-Segur equations and solutions,
        Symmetry 13 (2021) 23.
\bibitem{Bogo-1990} O.I. Bogoyavlenskii,
        Overturning solitons in new two-dimensional integrable equations,
        Math. USSR-Izv. 34 (1990) 245-259.
\bibitem{Zuo} X.M. Zhu, D.F. Zuo,
        Some (2+1)-dimensional nonlocal `breaking soliton'-type systems,
        Appl. Math. Lett. 91 (2019) 181--187.
\bibitem{Zhu} X.M. Zhu, K.L. Tian,
        Decomposition, Darboux transformation and soliton solutions for the (2+1)-dimensional integrable nonlocal ``breaking soliton'' equation,
        Mod. Phys. Lett. B 34(24) (2020) 2050251.
\bibitem{WWZ-CTP} J. Wang, H. Wu, D.J. Zhang,
        Solutions of the nonlocal (2+1)-D breaking solitons hierarchy and the negative order AKNS hierarchy,
        Commun. Theor. Phys. 72 (2020) 045002.
\bibitem{Yao} Y.Q. Yao, D.Y. Chen, D.J. Zhang,
        Multisoliton solutions to a nonisospectral (2+1)-dimensional breaking soliton equation,
        Phys. Lett. A 372 (2008) 2017--2025.
\bibitem{Chen-book} D.Y. Chen,
        {\it Introduction of Soliton Theory,}
        Beijing: Science Press, (2006) (in Chinese).
\bibitem{AKNS-Lax} M.J. Ablowitz, D.J. Kaup, A.C. Newell, H. Segur,
        Nonlinear-evolution equations of physical significance,
        Phys. Rev. Lett. 31 (1973) 125--127.
\bibitem{Hirota-book} R. Hirota,
        {\it The Direct Methods in Soliton Theory},
        Cambridge: Cambridge University Press, (2004).
\bibitem{Hie-book} J. Hietarinta,
        {\it Scattering of solitons and dromions, in: R. Pike, P. Sabatier (Eds.), Scattering: Scattering and Inverse Scattering in Pure and Applied Science},
        Academic Press, London, (2002) 1773--1791.
\bibitem{ZDJ-Wron} D.J. Zhang,
        Notes on solutions in Wronskian form to soliton equations: KdV-type,
        arXiv:nlin.SI/0603008 (2006).
\bibitem{ZZSZ} D.J. Zhang, S.L. Zhao, Y.Y. Sun, J. Zhou,
        Solutions to the modified Korteweg-de Vries equation (review),
        Rev. Math. Phys. 26 (2014) 14300064.
\bibitem{AbMu-PLA} M.J. Ablowitz, Z.H. Musslimani,
        Integrable space-time shifted nonlocal nonlinear equations,
        Phys. Lett. A 409 (2021) 127516.
\bibitem{Gurses} M. G\"{u}rses, A. Pekcan,
        Soliton solutions of the shifted nonlocal NLS and MKdV equation,
        arXiv: 2106.14252v2 (2021).
\bibitem{Zhang-shift} S.M. Liu, J. Wang, D.J. Zhang,
        Solutions to integrable space-time shifted nonlocal equations,
        arXiv:2107.04183v1 (2021).

\end{thebibliography}
\end{document}